\numberwithin{figure}{section}
\theoremstyle{plain}
\newtheorem{thm}{\protect\theoremname}
  \theoremstyle{plain}
  \newtheorem{lem}{\protect\lemmaname}
  \theoremstyle{plain}
  \newtheorem{defn}{\protect\definitionname}
  \theoremstyle{plain}
  \theoremstyle{plain}
  \newtheorem{cor}{\protect\corollaryname}
  \theoremstyle{plain}
  \providecommand{\propname}{Proposition}
  \providecommand{\corollaryname}{Corollary}
  \providecommand{\definitionname}{Definition}
  \providecommand{\factname}{Fact}
  \providecommand{\lemmaname}{Lemma}
\providecommand{\theoremname}{Theorem}
\providecommand{\remarkname}{Remark}
\begin{document}

\title{The Quantum Theil Index: Characterizing Graph Centralization using von Neumann Entropy}

\author{{
\sc David E. Simmons}$^*$,\\[2pt] 
{\sc Justin P. Coon}\\[2pt]
Department of Engineering Science \\University of Oxford \\Oxford, UK, OX1 3PJ\\ 
$^*${ {Corresponding author: david.simmons@eng.ox.ac.uk}} \\
{ {justin.coon@eng.ox.ac.uk}}\\[6pt]
{\sc and}\\[6pt]
{\sc Animesh Datta} \\[2pt]
Department of Physics \\University of Warwick \\Coventry, UK, CV4 7AL\\
 {animesh.datta@warwick.ac.uk}}

\maketitle
\begin{abstract}
{We show that the  von Neumann entropy (from herein referred to as the von Neumann index) of a graph's trace normalized combinatorial Laplacian provides structural information about the level of centralization across a graph. This is done by considering the Theil index, which is an established statistical measure used to determine levels of inequality across a system of `agents', e.g., income levels across a population.
Here, we establish a Theil index for graphs, which provides us with a macroscopic measure of graph centralization.  
Concretely, we show that the von Neumann index can be used to bound the graph's Theil index, and thus we provide a direct characterization of graph centralization via the von Neumann index. Because of the algebraic similarities between the bound and the Theil index, we call the bound the von Neumann Theil index. 
We elucidate our ideas by providing examples and a discussion of different $n=7$ vertex graphs. We also discuss how the von Neumann Theil index provides a more comprehensive measure of centralization when compared to traditional centralization measures, and when compared to the graph's classical Theil index. This is because it more accurately accounts for macro-structural changes that occur from micro-structural changes in the graph (e.g., the removal of a vertex). Finally, we provide future direction, showing that the von Neumann Theil index can be generalized by considering the R\'enyi entropy. We then show that this generalization can be used to bound the negative logarithm of the graph's Jain fairness index.}
{Graph; structure; von Neumann; entropy; complexity; centralization.}
\\
2000 Math Subject Classification: 34K30, 35K57, 35Q80,  92D25
\end{abstract}

\section{Introduction}

Recently, there have been many attempts to increase our understanding of the structure of graphical models by studying entropic quantitites relating to the graph's underlying constructs. Many approaches to studying graph entropy have been proposed, and to the best of the authors' knowledge, it is not clear if/how many of these entropic quantities relate to each other. For example, in \cite{dehmer2008information} an information function was defined on each of the vertices of a graph in order to associate an entropic quantity with the network; in \cite{anand2011shannon}, the entropic properties of graph ensembles have been studied; while in \cite{coon2016topological,artacoon2016topological} Shannon entropy was used to study the topological uncertainty for networks embedded within a spatial domain. There have also been studies of the von Neumann entropy\footnote{If we define the von Neumann entropy $H\left(G\right)$ of a graph to be the von Neumann entropy of its trace normalized Laplacian, then - for unions of disjoint graphs we will not have the sub-additivity property that is typically required from entropic quantity be satisfied: i.e., the von Neumann entropy of the union may be larger than the the sum of the von Neumann entropies of its disjoint parts. This point was discussed in \cite{de2015structural} and can also be seen by observing Proposition 1.5 of \cite{dairyko2017note} and considering the disjoint union of two identical graphs, each with von Neumann entropy greater than one. Consequently, from herein we refer to the   graph's von Neumann entropy as its von Neumann index.} \cite{von1955mathematical} (from herein referred to as the von Neumann index) of graphs \cite{anand2011shannon,braunstein2006laplacian,passerini2008neumann,anand2009entropy,
de2016interpreting,dairyko2017note,belhaj2016weighted,dutta2016graph,dutta2017zero,simmons2017symmetric,giovannetti2011kirchhoff}. This work focuses on such a study. 

\subsection{Graphs and the von Neumann Index}

The mathematical theory of quantum mechanics allows us to view quantum states of finite-dimensional systems as Hermitian, positive semi-definite matrices with unit trace \cite{von1955mathematical}. It is also known that the combinatorial Laplacian of a graph is a symmetric positive semi-definite matrix. Thus, normalizing this matrix by its trace allows us to associate a quantum state with a graph, \cite{braunstein2006laplacian}. 
A natural next step is to study how the structure of the graph is related to the von Neumann index of the graph's trace normalized Laplacian, \cite{braunstein2006laplacian}. This idea was introduced in \cite{braunstein2006laplacian}. To the best of the authors' knowledge, a precise relation is yet to be established, \cite{du2010note}, and the goal of this work is to shed further light on this topic. 

Let us discuss related work in this area. In \cite{passerini2008neumann,du2010note} it was noted that the von Neumann index may be interpreted as a measure of network regularity. In \cite{anand2009entropy}, it was then shown that for scale free networks the von Neumann index of a graph is linearly related to the Shannon entropy of the graph's ensemble. Correlations between these entropies were observed when the graph's degree distribution displayed heterogeneity in 
\cite{anand2011shannon}. 
 In \cite{dairyko2017note}, it was shown that the von Neumann index of an $n$ vertex graph is almost always minimized by the star graph. In that work, the star graph was also conjectured to minimize the entropy over connected graphs. Other studies of the von Neuamnn entropy of Laplacians have been performed in \cite{de2016interpreting,belhaj2016weighted,dutta2016graph,
 dutta2017zero,simmons2017symmetric,de2016spectral}. Also of interest is \cite{giovannetti2011kirchhoff}, in which it is shown that the quantum relative entropy of a graph's Laplacian with another particular quantum state can be related to the number of spanning trees of the graph.

\subsection{Graphs, Centrality, and Centralization\label{sec:centralization}}

Centrality in the context of graphical models was initially investigated in the pioneering work of \cite{bavelas1950communication}. It is concerned with understanding and measuring the \emph{importance} or \emph{centrality} of \emph{particular} vertices within a graph. It is therefore related to the microscopic properties of the graph. The method that is used to define the importance of a vertex is, of course, dependent upon the nature of the underlying system that the graph represents. Because of this, there are many different measures of vertex centrality that have been proposed in the literature \cite{newman2010networks}, e.g., authority centrality, betweenness centrality, closeness centrality, degree centrality, eigenvector centrality, flow betweenness, hub centrality, and Katz centrality. In \cite{qi2012laplacian}, Laplacian centrality was presented. Laplacian centrality is a vertex specific measure that is obtained by calculating the normalized Laplacian energy of the graph before and after a vertex is removed, and computing the difference between these two values.

Related to centrality is the concept of graph centralization. Graph centralization is concerned with understanding the overall structure of the graph, and determining whether it possesses a high or low level of centralization \cite{freeman1978centrality}.  It is therefore related to the macroscopic properties of the graph. As with centrality, there are different ways that graph centralization can be measured. Freeman \cite{freeman1978centrality} showed that any measure of vertex centrality can be extended to a measure of graph centralization. He constructed three centralization measures, each taking values between $0$ and $1$.
Using his measures, he demonstrated that the star graph obtains the maximum possible value, while the complete graph achieves the minimum possible value. A more in depth discussion of graph centralization can be found in \cite{scott2017social}.

In this work, we focus on a novel measure of graph centralization. This measure is not explicitly derived from a vertex centrality measure as per Freeman's approach, \cite{freeman1978centrality}. Instead, it is related to the Theil index \cite{theil1972statistical} (introduced below) and the graph's von Neumann index. To the best of the authors' knowledge, this is the first time such a measure has been proposed.

\subsection{The Theil Index\label{sec:Theilindex}}

The Theil index $T$ has been a topic of interest since its inception in $1972$ \cite{theil1972statistical}. The Theil index  \cite{theil1972statistical,conceiccao2000young} is an established observable that has been used to measure, e.g., inequality within economies, as well as racial segregation within social environments. Formally, the Theil index is defined on a population of $n$ \emph{agents}, each with some \emph{characteristic} $x_i\geq 0$, $i\in\{1,\dots,n\}$.  The Theil index is then given by \cite{conceiccao2000young}
\begin{equation}
T := \frac{1}{n} \sum_{i=1}^{n} \frac{x_i}{\mu}\log\left(  \frac{x_i}{\mu} \right),\label{eq:Theilindex1}
\end{equation}
where $\mu$ is defined to be 
\begin{equation}
\mu := \frac{1}{n} \sum_{i=1}^{n}x_i.\label{eq:Theilindex1mu}
\end{equation} 
The Theil index can be used to characterize the uniformity of the characteristics among the agents, and is bounded according to $$0\leq T  \leq \log n.$$If $ T =0$ then the characteristics are uniform among the agents; if  $ T  =\log n$ then $x_i \neq 0$ for only one $i$.
Some examples of characteristics from the literature are income \cite{novotny2007measurement}  or total wetland \cite{sampath1988equity}.

Through inspection, it is easy to see that the Theil index is closely related to the Shannon entropy. Specifically, $x_i/(\mu n)$ defines a probability distribution on the agents (because $\sum_ix_i/(\mu n) =1$ and $x_i/(\mu n)\geq 0$), so that we can define the Shannon entropy \cite{cover2012elements}
\begin{equation}
H(X) := - \sum_{i=1}^{n} \frac{x_i}{\mu n}\log\left(  \frac{x_i}{\mu n} \right),
\end{equation}
where $  X$ is the random variable with distribution defined by $x_i/(\mu n)$.
The Theil index can then be written as 
\begin{equation}
T = \log n - H.
\end{equation}
Alternatively, we may write
\begin{equation}
T = S\left( X\Vert  {U}_n \right)  ,\label{eq:relativeTheil}
\end{equation}
where $S\left(  A\Vert  {B}  \right) $  denotes the relative entropy \cite{wilde2011classical} between two random variables $  A$ and $  B$, and $  X$ is the random variable described immediately above while $U_n$ is the discrete uniform random variable on $n$ points. Equation \eqref{eq:relativeTheil} provides an information theoretic interpretation of the Theil index as the average number of extra bits of information required to encode a source distributed according to $X$ given that we assume it to be uniformly distributed (the logarithm being base two here). This also adds to our intuition of the Theil index as a measure of uniformity/inequality.
 
\subsection{Contributions}

In this work, we provide an interpretation of the von Neumann index of the graph's Laplacian as a measure of the graph's centralization\footnote{Our work differs from \cite{qi2012laplacian}, in which centrality was established from properties of the graph's Laplacian. First, \cite{qi2012laplacian} proposed a centrality measure, not a graph centralization measure (for a discussion of centrality and centralization, see section \ref{sec:centralization}). Second, that work did not consider the von Neumann entropy when establishing their centrality measure.}. 
This is done by defining a Theil index for the graph in terms of the degrees of the graph's vertices (equation \eqref{eq:TheildegreeEnt} below), which we can use to determine the level of centralization across the vertices of the graph. Our main result is to show that the graph's von Neumann index can be used to bound the graph's Theil index  (Theorem \ref{lem:degreevon} below). The bound is a particular instance of the quantum relative entropy. Because of the relationship between relative entropy and the Theil index \eqref{eq:relativeTheil}, this motivates us to refer to the bound as the graph's von Neumann Theil index (Definition \ref{def:quantumTheil} below). A corollary to our main result (Corollary \ref{cor4}) also determines a relationship between graph von Neumann entropy and the graph entropy considered in \cite{dehmer2008information,cao2014extremality}. To elucidate our ideas, we provide examples and a discussion of different $n=7$ vertex graphs in section \ref{sec:examples}. In these examples, we also consider Freeman's centralization measures \cite{freeman1978centrality}.
 

%

\section{Preliminary Material\label{sec:intro}}

The role of this section is to present background material on spectral graph theory and density operators, and to present the concept of   Theil indices for graphs in terms of the graph's entropy  \cite{dehmer2008information}.

\subsection{Spectral graph theory and density operators}

Consider an undirected graph $G=\left(V,E\right)$, which consists of a vertex set 
\begin{equation}
V=\left\{ v_{1},\dots,v_{n}\right\} , 
\end{equation}
with $\left|V\right|=n$, and an edge set 
\begin{equation}
E=\left\{ e_{1},\dots,e_{m}\right\} ,
\end{equation}
 with $\left|E\right|=m.$ Here, we assume that the element of the edge set (say the $k$th element) connecting vertex pair $\left(i,j\right)$ can be written as \begin{equation}e_{k}=v_{i,j}.\label{eq:edge_vertexeq}\end{equation} The degree matrix of the graph is given by 
\begin{equation}
\Delta\left(G\right)=\mathrm{diag}\left\{ d_{1},\dots,d_{n}\right\} ,
\end{equation}
where $d_{i}$ is the degree of $v_{i}$. The $\left(i,j\right)$th element of a graph's adjacency matrix is given by
\begin{equation}
\left[A\left(G\right)\right]_{i,j}=1
\end{equation}
if vertex pair $\left(i,j\right)$ is connected by an edge. The Laplacian of the graph is then defined to be  \cite{newman2000laplacian}
\begin{equation}
L\left(G\right)=\Delta\left(G\right)-A\left(G\right).
\end{equation}

A graph's Laplacian is useful when studying the connectedness properties of the graph. For example, the number of zero eigenvalues of the Laplacian's spectrum is known to categorize the number of disconnected regions within a graph \cite{newman2000laplacian}. Also, Kirchoff's theorem tells us that the (absolute value of the) determinant of any first order minor associated with a graph's Laplacian is the number of spanning trees of the graph \cite{godsil2013algebraic}.  

More, recently, graph Laplacians have been studied using tools from quantum information theory. Specifically, we can view quantum states as positive semi-definite matrices with unit trace. It is well known that the Laplacian of a graph is a Hermitian positive semi-definite matrix. Thus, after normalizing this matrix by its trace we obtain a new matrix that has all the properties of a density matrix (e.g., a quantum state):
\begin{equation}
\rho_G = \frac{L\left(G\right) }{\mathrm{Tr}\{L\left(G\right)\}}.\label{eq:vertexstate}
\end{equation}
The idea of measuring the von Neumann index of \eqref{eq:vertexstate} was originally posed in \cite{braunstein2006laplacian}:
\begin{equation}
H(G) = H\left(\rho_G \right) :=-\mathrm{Tr}\left\{\rho_G\log\rho_G\right\}.\label{eq:graphvonneumannentropy}
\end{equation}
Note, throughout the paper, we use $H(G)$ and $H(\rho_G)$ interchangeably. With \eqref{eq:graphvonneumannentropy}, we can identify the quantum relative entropy between $\rho_G$ and $I/n$, which is given by
\begin{equation}
D\left( \rho_G \Vert I_n  \right) := \mathrm{Tr}\left\{  \rho_G\left( \log \rho_G - \log I/n  \right) \right\} = \log n - H(G).\label{eq:graphrelativenentropy}
\end{equation}
The main result of this work will be to show that $D\left( \rho_G \Vert I_n  \right)$ bounds a centralization measure (the graph's Theil index, discussed below) on the network. Because of the relationship between relative entropy and the Theil index \eqref{eq:relativeTheil}, our main result motivates us to refer to the bound as the graph's von Neumann Theil index (Definition \ref{def:quantumTheil} below). 

 \subsection{The Theil Index of a Graph\label{sec:graphTheil}}

As was discussed in the introduction, the Theil index $T $ assigns a value to a set of characteristics $\{x_1,\dots,x_n\}$ (see \eqref{eq:Theilindex1} and \eqref{eq:Theilindex1mu}).
In light of this, here we assign a Theil index to a graph by extracting characteristics from the graph's vertices. Specifically, we define a functional 
\begin{equation}
f\; : \; V \mapsto \mathbb{R}^+,\label{eq:functional}
\end{equation} 
on the vertices of an $n$ vertex graph, $G$. If we then let the $n$ characteristics be
$$x_{G,i} := f(v_i),\;i\in\{1,\dots,n\},$$
we obtain the Theil index of the graph $G$
\begin{equation}
T_f(G) := \frac{1}{n} \sum_{i=1}^{n} \frac{x_{G,i}}{\mu}\log\left(  \frac{x_{G,i}}{\mu} \right),
\end{equation}
where 
\begin{equation}
\mu := \frac{1}{n} \sum_{i=1}^{n}x_{G,i}.
\end{equation} 
 
It is important to acknowledge that our work does not establish a relationship between this Theil index and the graph's von Neumann index when $x_{G,i}$ is arbitrary. Instead, we establsih such a relationship when $x_{G,i}$ takes a specific form determined by the degrees of each of the graph's vertices. It is also important to acknowledge that this is \emph{not} the first time such a functional $f$ has been proposed in the literature. In particular, it was used in \cite{dehmer2008information} to define a concept of graph entropy. Specifically, \cite{dehmer2008information} gave the following definition.
\begin{defn}\label{def:genent}
Let $G$ be a graph on $n$ vertices. Let $f$ be the functional \eqref{eq:functional}, which assigns an arbitrary positive number to each vertex $v_i$. Then the entropy of the graph with respect to $f$ is defined to be
\begin{equation}
H_f\left( G \right) := - \sum_{i=1}^n \frac{f(v_i)}{\sum_j f(v_j)} \log \frac{f(v_i)}{\sum_j f(v_j)}.
\end{equation}
\end{defn}

With this definition, we can now write the Theil index of the graph as 
\begin{equation}
T_f(G) = \log n - H_f\left( G \right).\label{eq:generalTheilindex}
\end{equation}
One natural example of a functional is the vertex degree power,  
\begin{equation}f(v_i) = d_i^k,\end{equation}
 $k\in\mathbb{R}$,  which was originally considered in \cite{cao2014extremality}. Here, we call the corresponding entropy the \emph{degree entropy}, which is given by
\begin{equation}
H_{d,k}\left( G \right) := -\sum_i \tilde{d}_i^k \log \tilde{d}_i^k, \label{eq:degreeentropy}
\end{equation}
where 
\begin{equation}
\tilde{d}_i^k := \frac{d_i^k}{\sum_j d_j^k}\label{eq:relativedegreedist}
\end{equation}
is the $k$th power relative degree of vertex $i$. We can then write the Theil index corresponding to the graph's degree entropy as 
\begin{equation}
T_{d,k} \left( G \right) := \log n - H_{d,k}(G).\label{eq:TheildegreeEnt}
\end{equation}
 
\subsection{The Theil index and Graph Centralization\label{sec:TheilandCent}} 
 
Naturally, we can use the Theil index of the graph corresponding to the degree entropy, \eqref{eq:TheildegreeEnt},  to measure the level of centralization present across a network. Specifically, because the Theil index is a measure of relative entropy, we can use it to measure how `close' the relative degree distribution \eqref{eq:relativedegreedist} is to the uniform distribution:  regular graphs have a uniform degree distribution, while star graphs may be perceived as having the `least uniform' degree distribution. It is easy to show that these examples, respectively, induce minimum and maximum values for the Theil index. Intuitively, the complete graph (which is a $k$-regular graph) may be thought of as being the least centralized graph, while the star graph may be thought of as being the most centralized graph.

\section{The Graph Theil Index and von Neumann Entropy}

Having developed the Theil index of the graph \eqref{eq:generalTheilindex}, we now present the following Theorem, which provides a relationship between the graph's Theil index\footnote{Specifically, the Theil index corresponding to its degree entropy \eqref{eq:TheildegreeEnt}.} and the quantum relative entropy defined in \eqref{eq:graphrelativenentropy}. From section \ref{sec:TheilandCent}, this then identifies a relationship between the quantum relative entropy of the graph and its centralization. Because of these intricate relationships, and the relationship that the Theil index has with relative entropy (see \eqref{eq:relativeTheil}), before presenting the theorem we first define the von Neumann Theil index of the graph.
\begin{defn}\label{def:quantumTheil}
Let $\rho_G$ be given by \eqref{eq:vertexstate} and $I$ be an equivalently dimensioned identity matrix. The von Neumann Theil index of the graph $T_{Q}(G)$ is defined to be the quantum relative entropy between $\rho_G$ and $I/n$, i.e.,
\begin{equation}
T_{Q} \left( G \right) := D\left( \rho_G \Vert I/n\right) := \mathrm{Tr}\left\{\rho_G \left(   \log \rho_G - \log I/n\right) \right\} = \log n - H (G).\label{eq:Theilquant}
\end{equation}
\end{defn}
\begin{thm}\label{lem:degreevon}
Let $G$ be a connected graph on $n$ vertices, $T_{d,k}\left( G \right)$ be the graph's Theil index corresponding to its degree entropy \eqref{eq:TheildegreeEnt}, and $T_Q(G)$ be the graph's von Neumann Theil index \eqref{eq:Theilquant}. Also, let $M :=  \{ d_i : d_i = \max_j \{ d_j\}\}$, so that $\vert M\vert$ is the multiplicity of $\max_i \{d_i\}$.
Exactly one of the following statements holds:
\begin{description} 
\item[\emph{A:}] 
$\log n - \log \vert M\vert \geq T_{Q}$ if and only if there exists $k$ such that
\begin{equation}
T_{d,1} \left( G \right) \leq T_Q(G) \leq T_{d,k'} \left( G \right)\;\forall \;k'\geq k.\label{eq:lowerboundvonneumannentropy}
\end{equation}
\item[$\quad$] Furthermore,
\begin{equation}
\mathrm{if}\quad\frac{\left( \sum_i d_i\right)^2}{\sum_i d_i + \sum_i d_i^2} \geq \vert M\vert \quad\mathrm{then\quad} \log n - \log \vert M\vert \geq  T_{Q}.\label{eq:renyi2entropyMrelation}
\end{equation}
\item[\emph{B:}]\label{statementA}  $\log n -\log \vert M\vert \leq  T_{Q}$ if and only if  
\begin{equation}
T_{d,k} \left( G \right) \leq T_Q(G) \;\forall \;k.\label{eq:upperboundvonneumannentropy}
\end{equation}
\end{description}
\end{thm}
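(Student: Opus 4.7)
The plan is to decompose the theorem into three ingredients and then reassemble them. The first ingredient is the universal bound $T_{d,1}(G) \leq T_Q(G)$, which holds for every connected graph. To see it, I would note that the diagonal entries of $\rho_G = L(G)/\mathrm{Tr}\{L(G)\}$ are precisely the normalized degrees $d_i/\sum_j d_j$, so the Shannon entropy of the diagonal is exactly $H_{d,1}(G)$; the quantum-information fact that a measurement in the computational basis cannot decrease entropy, $D(\rho_G \Vert \mathrm{diag}(\rho_G)) \geq 0$, then rearranges to $H(G) \leq H_{d,1}(G)$, i.e.\ $T_{d,1}(G) \leq T_Q(G)$. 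Importantly, this bound is independent of the dichotomy and will feed into Statement A.

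The second ingredient is the behaviour of $T_{d,k}(G)$ for $k \geq 0$. As $k\to\infty$ the weights $\tilde d_i^k$ concentrate uniformly on the argmax set $M$, so $H_{d,k}(G)\to\log\vert M\vert$ and hence $T_{d,k}(G) \to \log n - \log\vert M\vert$. For monotonicity, I would write $H_{d,k}(G) = \log Z(k) - k\,\langle\log d\rangle_k$ with $Z(k) = \sum_j d_j^k$ and $\langle\log d\rangle_k = \sum_j \tilde d_j^k \log d_j$, and differentiate; using $Z'(k)/Z(k) = \langle\log d\rangle_k$ and $\tfrac{d}{dk}\langle\log d\rangle_k = \mathrm{Var}_k(\log d)$, the expression collapses to $\frac{dH_{d,k}}{dk} = -k\,\mathrm{Var}_k(\log d) \leq 0$ on $[0,\infty)$. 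Thus $H_{d,k}(G)$ decreases monotonically from $\log n$ (at $k=0$) to $\log\vert M\vert$ (as $k\to\infty$), which in particular gives $T_{d,k}(G) \leq \log n - \log\vert M\vert$ for every $k \geq 0$ while $T_{d,k}(G)$ climbs monotonically up to $\log n - \log\vert M\vert$.

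With these two pieces, both halves of the dichotomy fall out. For Statement B, if $\log n - \log\vert M\vert \leq T_Q(G)$ then $T_{d,k}(G) \leq \log n - \log\vert M\vert \leq T_Q(G)$ for all $k$; the converse follows by letting $k\to\infty$. For Statement A, if $\log n - \log\vert M\vert > T_Q(G)$ then monotone convergence supplies a finite $k$ beyond which $T_{d,k'}(G) \geq T_Q(G)$, and combining with the universal bound $T_{d,1}(G) \leq T_Q(G)$ from the first ingredient yields the sandwich \eqref{eq:lowerboundvonneumannentropy}; the converse again follows by passing to the limit. For the sufficient condition \eqref{eq:renyi2entropyMrelation}, I would use that the von Neumann entropy dominates the R\'enyi-$2$ entropy, $H(G) \geq -\log\mathrm{Tr}(\rho_G^2)$, together with the elementary computation $\mathrm{Tr}(L(G)^2) = \sum_i d_i^2 + \sum_i d_i$ (the diagonal contributes $\sum d_i^2$; the $2m$ off-diagonal $-1$'s contribute $\sum d_i$), whence $\mathrm{Tr}(\rho_G^2) = (\sum d_i + \sum d_i^2)/(\sum d_i)^2$, so the hypothesis forces $H(G) \geq \log\vert M\vert$ and hence $T_Q(G) \leq \log n - \log\vert M\vert$.

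I expect the main obstacle to be the monotonicity identity $dH_{d,k}/dk = -k\,\mathrm{Var}_k(\log d)$: once it is in hand, the existence of the finite threshold $k$ in Statement A and the upper envelope $\log n - \log\vert M\vert$ on $T_{d,k}$ used for Statement B both follow immediately. Without it, one has to reason directly about the limit $\log\vert M\vert$ and about the possibility that $T_{d,k}$ overshoots this asymptote somewhere in between, which is exactly what would sink the ``for all $k$'' quantifier in Statement B.
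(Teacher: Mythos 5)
Your proposal is correct and follows the same overall architecture as the paper --- (i) a universal bound $T_{d,1}(G)\leq T_Q(G)$ coming from a measurement-entropy inequality, (ii) monotonicity of $T_{d,k}$ in $k$ together with the limit $\log n-\log\vert M\vert$, and (iii) the R\'enyi-$2$ formula to certify the hypothesis of Statement A --- but two of the three ingredients are proved by genuinely different and, in the case of the monotonicity lemma, cleaner means. For (i) the paper invokes the characterization of $H(\rho)$ as the minimum over rank-one POVMs of the measured entropy (\cite[Theorem 11.1.1]{wilde2011classical}), whereas you use the equivalent dephasing inequality $D(\rho_G\Vert\mathrm{diag}(\rho_G))\geq 0$ after observing that $\mathrm{diag}(\rho_G)$ is exactly the normalized degree distribution; both are valid. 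For (ii) the paper differentiates $H_{d,k}$ termwise, splits the indices into sets $\mathcal{N}_{k'}$ and $\mathcal{P}_{k'}$ according to the sign of $\frac{\mathrm{d}}{\mathrm{d}k}(d_i^k/\sum_j d_j^k)$, and needs an auxiliary lemma ordering the degrees across the two sets to close the bound; your identity $H_{d,k}=\log Z(k)-k\langle\log d\rangle_k$ with $\frac{\mathrm{d}H_{d,k}}{\mathrm{d}k}=-k\,\mathrm{Var}_k(\log d)$ collapses this to one line and, as a bonus, makes explicit that monotonicity holds precisely on $k\geq 0$ (the paper's Lemma~\ref{lem:decreasingfuncofk} is stated for all $k$, but its own bounding step also tacitly requires $k'>0$, so your restriction is not a loss). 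For (iii) you recompute $\mathrm{Tr}\{L(G)^2\}=\sum_i d_i^2+\sum_i d_i$ directly where the paper cites \cite[Equation (2)]{dairyko2016note}; the content is identical. The only caveat, shared equally by the paper's proof, is that the dichotomy is argued cleanly only when $\log n-\log\vert M\vert$ and $T_Q$ are not exactly equal; your strict-inequality phrasing in Statement A makes this explicit but does not introduce any gap beyond what is already present in the published argument.
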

\begin{proof}
The proof is given in section \ref{lem:degreevonproof} below.
\end{proof}

\subsection{Discussion of Theorem \ref{lem:degreevon}}

Theorem \ref{lem:degreevon} deserves discussion. First and foremost, this theorem highlights a fundamental relationship between graph von Neumann entropy (and the von Neumann Theil index) and graph centralization. Specifically, it tells us that $T_Q(G)$ (see \eqref{eq:graphrelativenentropy}) \emph{always} upper bounds the centralization measure when $k=1$. Because $T_{d,k} \left( G \right)$ is a non decreasing function of $k$ (Lemma \ref{lem:decreasingfuncofk}), it also tells us that, given $\log n - \log \vert M\vert \geq T_{Q}(G)$, $T_Q(G)$ will (for a specific $k>1$) equal the centralization measure exactly. Equation \eqref{eq:renyi2entropyMrelation} provides information about when we can be sure  $\log n - \log \vert M\vert \geq T_{Q}$. These ideas are formalized in the following corollary.
\begin{cor}
We have \begin{equation}T_{d,1} (G) \leq T_Q(G).\label{eq:Sd1H}\end{equation}
 Furthermore, if $\log n - \log \vert M\vert \geq T_{Q}$ then there exists $k\in\mathbb{R}^+$ such that
\begin{equation}T_{d,k} \left( G \right) = T_Q(G) .\label{eq:Theuilindex}\end{equation}
\end{cor}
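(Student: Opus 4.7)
The plan is to derive both parts of the corollary directly from Theorem \ref{lem:degreevon}, since its two cases A and B together exhaust all possibilities for the comparison of $\log n - \log\vert M\vert$ with $T_Q(G)$ (trichotomy on the real line, with overlap when equality holds).

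For the inequality \eqref{eq:Sd1H}, I would split into the two cases $\log n - \log\vert M\vert \geq T_Q(G)$ and $\log n - \log\vert M\vert \leq T_Q(G)$. In the first case, Statement A of Theorem \ref{lem:degreevon} supplies \eqref{eq:lowerboundvonneumannentropy}, whose left-hand inequality is precisely $T_{d,1}(G) \leq T_Q(G)$. In the second case, Statement B gives $T_{d,k}(G) \leq T_Q(G)$ for every $k$, and specializing to $k = 1$ finishes the argument. Since every graph falls into at least one of these two cases, \eqref{eq:Sd1H} holds unconditionally.

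For \eqref{eq:Theuilindex}, I would assume $\log n - \log\vert M\vert \geq T_Q(G)$ and apply Statement A of Theorem \ref{lem:degreevon} to extract some $k^* \geq 1$ satisfying $T_{d,1}(G) \leq T_Q(G) \leq T_{d,k^*}(G)$. I would then observe that the map $k \mapsto T_{d,k}(G) = \log n - H_{d,k}(G)$ is continuous on $\mathbb{R}^+$, because each $\tilde d_i^{\,k} = d_i^k / \sum_j d_j^k$ depends continuously on $k$ and $p \mapsto -p\log p$ is continuous on $[0,1]$. The intermediate value theorem, applied to this continuous function on $[1,k^*]$, then produces some $k \in [1,k^*]$ with $T_{d,k}(G) = T_Q(G)$, which is exactly \eqref{eq:Theuilindex}.

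The main conceptual content is already in Theorem \ref{lem:degreevon}; the only non-mechanical step is verifying continuity in the exponent $k$, which is routine. Monotonicity of $T_{d,k}(G)$ in $k$ (Lemma \ref{lem:decreasingfuncofk}) is not strictly needed for existence but makes the crossing essentially unique, reinforcing the interpretation of $T_Q(G)$ as a genuine value of the degree Theil family rather than an incidental numerical coincidence.
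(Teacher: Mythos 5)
Your proposal is correct and follows essentially the same route as the paper, which derives both claims from the case dichotomy of Theorem \ref{lem:degreevon} together with the monotonicity of $k \mapsto T_{d,k}(G)$ from Lemma \ref{lem:decreasingfuncofk}. Your explicit appeal to continuity in $k$ and the intermediate value theorem merely makes precise a step the paper leaves implicit, and is a welcome (if routine) addition rather than a departure.
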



Interestingly, $k$-regular graphs always satisfy the first statement of Theorem \ref{lem:degreevon}. This is because for any graph $G$, $H(G)\leq \log (n-1)$ \cite{passerini2008neumann}, while $\log|M|=\log n$ when  $G$ is $k$-regular.
We therefore reach a second corollary of the theorem.
\begin{cor}\label{cor:kreg}
If $G$ is $k$-regular then
\begin{equation} 0 = T_{d,k}  (G)  \leq T_Q(G)\;\forall\;k\geq1.\label{eq:Sd1H11}
\end{equation}
\end{cor}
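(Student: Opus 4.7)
The plan is to read the corollary as two separate assertions: (i) $T_{d,k}(G) = 0$ for every exponent $k\ge 1$, and (ii) $T_{d,k}(G) \leq T_Q(G)$. Here the $k$ appearing in ``$k$-regular'' plays no role in the degree-entropy exponent, so I would first relabel the common vertex degree as $r$ to avoid notational collision and to keep the ``$k$'' of $T_{d,k}$ free to range over $\mathbb{R}$.

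Assertion (i) is a one-line computation. If $G$ is $r$-regular on $n$ vertices then $d_i = r$ for every $i$, so for any real exponent $k$ the relative degree in \eqref{eq:relativedegreedist} collapses to $\tilde d_i^k = r^k/(n r^k) = 1/n$. Plugging this into \eqref{eq:degreeentropy} gives $H_{d,k}(G) = \log n$, whence $T_{d,k}(G) = \log n - \log n = 0$ by \eqref{eq:TheildegreeEnt}.

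For assertion (ii) I would invoke statement B of Theorem \ref{lem:degreevon}. The only nontrivial step is to verify its hypothesis $\log n - \log|M| \leq T_Q(G)$. Since every vertex attains the maximum degree in an $r$-regular graph, $|M| = n$ and thus $\log n - \log|M| = 0$. The Passerini--Severini bound $H(G) \leq \log(n-1)$, already cited in the surrounding discussion, then yields
\begin{equation}
T_Q(G) \;=\; \log n - H(G) \;\geq\; \log n - \log(n-1) \;>\; 0 \;=\; \log n - \log|M|,
\end{equation}
so the premise of statement B holds and the theorem delivers $T_{d,k}(G) \leq T_Q(G)$ for all $k$. (A one-line alternative route to (ii) that avoids invoking the theorem is simply the non-negativity of quantum relative entropy, $T_Q(G) = D(\rho_G \,\|\, I/n) \geq 0$, combined with (i); but the proof via statement B ties the result more directly into the paper's framework.) There is no real obstacle here; the whole argument reduces to checking that the two sides of the regularity constraint $|M|=n$ line up with the Passerini--Severini bound.
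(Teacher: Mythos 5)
Your proposal is correct and follows essentially the same route as the paper: the paper likewise observes that $\vert M\vert = n$ for a regular graph and combines this with the Passerini--Severini bound $H(G)\leq\log(n-1)$ to place the graph in the case of Theorem \ref{lem:degreevon} whose conclusion is $T_{d,k}(G)\leq T_Q(G)$ for all $k$, while the vanishing of $T_{d,k}$ follows from the uniform relative degree distribution. If anything, your write-up is slightly more careful than the paper's prose in identifying that it is statement \textbf{B} (equivalently, statement \textbf{A} of Corollary \ref{cor4}) whose hypothesis is verified.
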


It is important to note that Theorem \ref{lem:degreevon} could have been stated as per the following corollary. However, doing so would not have elucidated the connection between von Neumann entropy and graph centralization.
\begin{cor}\label{cor4}
Consider an identical hypothesis as that of Theorem \ref{lem:degreevon}. Then exactly one of the following statements holds:
\begin{description} 
\item[\emph{A}]\label{statementA}  $H(G) \leq \log \vert M\vert$ if and only if  
\begin{equation}
H_{d,k} \left( G \right) \geq   H(G) \;\forall \;k.\label{eq:upperboundvonneumannentropycor}
\end{equation}
\item[\emph{B}] 
$H\left(G\right) \geq \log \vert M\vert$ if and only if there exist $k$ such that
\begin{equation}
H_{d,1} \left( G \right) \geq   H(G) \geq H_{d,k'} \left( G \right)\;\forall \;k'\geq k.\label{eq:lowerboundvonneumannentropycor}
\end{equation}
\end{description}
\end{cor}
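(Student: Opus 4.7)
The plan is to observe that Corollary \ref{cor4} is nothing more than a reformulation of Theorem \ref{lem:degreevon} obtained by substituting the identities $T_Q(G) = \log n - H(G)$ (from Definition \ref{def:quantumTheil}, equation \eqref{eq:Theilquant}) and $T_{d,k}(G) = \log n - H_{d,k}(G)$ (from equation \eqref{eq:TheildegreeEnt}) into every inequality appearing in the theorem. Since both substitutions take the form ``quantity $=$ $\log n$ minus entropy'', every ``$\leq$'' between Theil indices becomes ``$\geq$'' between entropies and vice versa, and any $\log n$ on either side of an inequality involving $T_Q$ or $T_{d,k}$ will cancel when one uses these identities.

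First I would rewrite the case-split hypothesis. Subtracting $\log n$ and negating, $\log n - \log\vert M\vert \geq T_Q(G)$ becomes $H(G) \geq \log\vert M\vert$, while $\log n - \log\vert M\vert \leq T_Q(G)$ becomes $H(G) \leq \log\vert M\vert$. Note that this causes case A of Theorem \ref{lem:degreevon} to correspond to case B of the corollary, and case B of the theorem to correspond to case A of the corollary, which is exactly the labelling in the corollary's statement.

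Next I would rewrite the interior inequalities in each case. In the conclusion of case B of Theorem \ref{lem:degreevon}, the statement $T_{d,k}(G)\leq T_Q(G)\ \forall k$ becomes $H_{d,k}(G)\geq H(G)\ \forall k$, matching \eqref{eq:upperboundvonneumannentropycor}. Similarly, in the conclusion of case A of Theorem \ref{lem:degreevon}, the chain $T_{d,1}(G)\leq T_Q(G)\leq T_{d,k'}(G)$ for all $k'\geq k$ becomes $H_{d,1}(G)\geq H(G)\geq H_{d,k'}(G)$ for all $k'\geq k$, matching \eqref{eq:lowerboundvonneumannentropycor}. Combining these substitutions with the swap of case labels discussed above produces exactly the two ``if and only if'' statements of Corollary \ref{cor4}, and since the original dichotomy in Theorem \ref{lem:degreevon} is exhaustive and mutually exclusive, so is the translated dichotomy.

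There is no genuine obstacle; the entire argument is bookkeeping: verifying that subtracting from $\log n$ flips the direction of each inequality in a consistent way, that the case split remains a true dichotomy (which is immediate, as $H(G)\geq \log\vert M\vert$ and $H(G)\leq \log\vert M\vert$ together cover $\mathbb{R}$ and agree only at equality, where both conclusions in fact coincide), and that the quantifiers on $k$ and $k'$ transport correctly. Hence the proof reduces to citing Theorem \ref{lem:degreevon} and applying the two defining identities.
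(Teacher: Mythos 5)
Your proposal is correct and matches the paper's treatment: the paper offers no separate proof of Corollary \ref{cor4}, presenting it explicitly as a restatement of Theorem \ref{lem:degreevon} obtained through the identities $T_Q(G)=\log n - H(G)$ and $T_{d,k}(G)=\log n - H_{d,k}(G)$, which is precisely the bookkeeping you carry out. Your observation that the case labels swap between the theorem and the corollary, and your remark on the boundary case $H(G)=\log\vert M\vert$, are both accurate.
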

Thus, Corollary \ref{cor4} provides an alternative relationship between the relative degree entropy and the von Neumann entropy of a graph.

\section{Proof of Theorem \ref{lem:degreevon}\label{lem:degreevonproof}}

To prove \eqref{eq:lowerboundvonneumannentropy} of statement $\mathbf{A}$, we begin by showing that $H_{d,1}\left( G \right) \geq H(G)$ (i.e., $T_{d,1}(G)\leq T_Q(G)$).  If  we select a set of  positive operator valued measures $\{\Lambda_i\}$, $i\in\{1,\dots,n\}$, such that $\left[ \Lambda_i\right]_{k,k} = 1$ and all other entries are zero, with $\rho_{G} $ given by \eqref{eq:vertexstate} we find that
\begin{equation}
H_{d,1} (G)  =  -\sum_i\mathrm{Tr}\{  \Lambda_i  \rho_{G}    \log \left(  \Lambda_i  \rho_{G} \right)  \}.
\end{equation}
The first inequality of \eqref{eq:lowerboundvonneumannentropy} then follows from \cite[Theorem 11.1.1]{wilde2011classical}, which shows that for an arbitrary quantum state $\rho$ the von Neumann entropy can be written as
\begin{equation}
H(\rho) = \min_{\{\Lambda_y\}} \left[  -\sum_y\mathrm{Tr}\{  \Lambda_y \rho    \log \left(  \Lambda_y \rho \right)\} \right],
\end{equation}
where the minimum is taken over all rank $1$ positive operator valued measures.
The second inequality of \eqref{eq:lowerboundvonneumannentropy} follows immediately from Lemma \ref{lem:decreasingfuncofk} below.

To prove \eqref{eq:renyi2entropyMrelation} of statement $\mathbf{A}$, we consider the R\'enyi 2 entropy of $G$, which is given by \cite[Equation (2)]{dairyko2016note}
\begin{equation}
H_2(G) = \log \left( \frac{\left( \sum_i d_i\right)^2}{\sum_i d_i + \sum_i d_i^2} \right).\label{eq:Renyi2graphentropy}
\end{equation}
We also note that $H_2(G)\leq H(G)$ $\forall \;G$, \cite{dairyko2016note}. Consequently, the following implications hold:
\begin{align}
  \frac{\left( \sum_i d_i\right)^2}{\sum_i d_i + \sum_i d_i^2}   \geq   \vert M\vert\Leftrightarrow  \log\frac{\left( \sum_i d_i\right)^2}{\sum_i d_i + \sum_i d_i^2}     \geq   \log\vert M\vert\Leftrightarrow H_2(G)& \geq   \log\vert M\vert \label{eq:renyi2entG}
\Rightarrow H(G)\geq   \log\vert M\vert .
\end{align}
where the second bidirectional implication follows from \eqref{eq:Renyi2graphentropy}.

 Statement $\mathbf{B}$ follows immediately from Lemma \ref{lem:decreasingfuncofk} below.

\begin{lem}\label{lem:decreasingfuncofk}
For all graphs $G$ on $n$ vertices, the Theil degree index of the graph $ T_{d,k}$ is a nondecreasing function of $k$, i.e.,
\begin{equation}
\frac{\mathrm{d} T_{d,k} }{\mathrm{d} k} \geq  0. \label{eq:derivativeHdk}
\end{equation} 
Also, with $M =  \{ d_i : d_i = \max_j \{ d_j\}\}$, we have
\begin{equation}
\lim_{k\to\infty} T_{d,k} \left(  G \right) = \log n - \log\vert M\vert. \label{eq:limitdegreeentropyk124}
\end{equation}
\end{lem}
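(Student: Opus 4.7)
The plan is to parametrise the distribution explicitly and differentiate. Let $Z(k) := \sum_j d_j^k$ and $p_i(k) := d_i^k / Z(k) = \tilde d_i^k$, so that $H_{d,k}(G) = -\sum_i p_i(k) \log p_i(k)$ and $T_{d,k}(G) = \log n - H_{d,k}(G)$. The key identity is $\log p_i(k) = k \log d_i - \log Z(k)$, from which $\frac{d \log Z}{dk} = \sum_j p_j \log d_j =: \langle \log d \rangle_k$ and
\begin{equation}
\frac{d p_i}{d k} = p_i(k)\,\bigl(\log d_i - \langle \log d\rangle_k\bigr).
\end{equation}

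Next I would compute $dH_{d,k}/dk$. Using $\sum_i p_i\,\tfrac{d \log p_i}{dk} = \sum_i \tfrac{dp_i}{dk} = 0$, one of the two natural terms disappears and
\begin{equation}
\frac{d H_{d,k}}{dk} \;=\; -\sum_i \frac{d p_i}{dk}\,\log p_i \;=\; -\sum_i p_i\bigl(\log d_i - \langle \log d\rangle_k\bigr)\bigl(k\log d_i - \log Z(k)\bigr).
\end{equation}
The $\log Z(k)$ piece vanishes because $\sum_i p_i(\log d_i - \langle \log d\rangle_k) = 0$, so what remains is exactly
\begin{equation}
\frac{d H_{d,k}}{dk} \;=\; -k\,\mathrm{Var}_{p(k)}(\log d) \;\leq\; 0,\qquad k\ge 0,
\end{equation}
and hence $dT_{d,k}/dk = k\,\mathrm{Var}_{p(k)}(\log d) \ge 0$, which is \eqref{eq:derivativeHdk}. (This is the standard monotonicity of entropy under Boltzmann tempering in the inverse-temperature parameter $k$.)

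For the limit \eqref{eq:limitdegreeentropyk124}, I would factor out the maximum degree: writing $d_{\max} := \max_j d_j$,
\begin{equation}
p_i(k) \;=\; \frac{(d_i/d_{\max})^k}{\sum_j (d_j/d_{\max})^k},
\end{equation}
and note that as $k\to\infty$ each ratio $(d_j/d_{\max})^k$ tends to $1$ if $d_j = d_{\max}$ and to $0$ otherwise. Therefore $p_i(k) \to 1/|M|$ for $i$ indexing a maximum-degree vertex and $p_i(k) \to 0$ for all other $i$, so $\lim_{k\to\infty} H_{d,k}(G) = \log |M|$ and consequently $\lim_{k\to\infty} T_{d,k}(G) = \log n - \log |M|$.

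I do not anticipate a genuine obstacle: the monotonicity reduces to the positivity of a variance once one recognises the Boltzmann-style identity for $dp_i/dk$, and the limit is a one-line dominant-term calculation. The only mild subtlety is that the derivative computation is naturally valid for $k \ge 0$, which is consistent with how the lemma is used elsewhere in the paper (e.g.\ the $k\ge 1$ range in Theorem \ref{lem:degreevon} and Corollary \ref{cor:kreg}).
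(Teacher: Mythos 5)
Your proof is correct, and for the monotonicity claim it takes a genuinely different (and sharper) route than the paper. The paper also starts from $\frac{\mathrm{d} H_{d,k}}{\mathrm{d} k} = -\sum_i \frac{\mathrm{d}}{\mathrm{d} k}\bigl(d_i^k/\sum_j d_j^k\bigr)\log\bigl(d_i^k/\sum_j d_j^k\bigr)$, but then splits the index set into the vertices where $\mathrm{d} p_i/\mathrm{d} k$ is negative and those where it is nonnegative, bounds each block by the extremal value of $\log\bigl(\sum_j d_j^{k'}/d_i^{k'}\bigr)$ over that block, and invokes an auxiliary lemma (that every negative-derivative vertex has smaller degree than every nonnegative-derivative vertex) to merge the two bounds into $\max_{i\in\mathcal{P}_{k'}}\{\cdot\}\cdot\frac{\mathrm{d}}{\mathrm{d} k}(1)=0$. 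You instead evaluate the derivative in closed form via the Boltzmann identity $\mathrm{d} p_i/\mathrm{d} k = p_i(\log d_i - \langle \log d\rangle_k)$, obtaining $\mathrm{d} T_{d,k}/\mathrm{d} k = k\,\mathrm{Var}_{p(k)}(\log d)$. This buys you more than the paper's qualitative bound: it identifies exactly when the derivative vanishes (regular graphs, or $k=0$), and it makes explicit that the sign of the derivative is the sign of $k$ --- so the lemma's unrestricted ``for all $k$'' is really only valid for $k\ge 0$. You are right that this restriction is also hidden (but present) in the paper's argument: its key comparison $\max_{i\in\mathcal{P}_{k'}}\log\bigl(\sum_j d_j^{k'}/d_i^{k'}\bigr) \le \min_{i\in\mathcal{N}_{k'}}\log\bigl(\sum_j d_j^{k'}/d_i^{k'}\bigr)$ only follows from the degree ordering of Lemma \ref{lem:decreasingfuncofk22} when $k'\ge 0$. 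Your limit computation is the same dominant-term argument as the paper's. The one caveat, common to both proofs, is that vertices of degree zero must be excluded or handled by the $0\log 0 = 0$ convention (both arguments manipulate $\log d_i$); this is harmless under the connectedness hypothesis used in Theorem \ref{lem:degreevon}.
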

\begin{proof}
We begin by proving \eqref{eq:derivativeHdk}. The basis of the proof will be in the construction of the bound in \eqref{eq:longbound} and \eqref{eq:longbound2} below. To aid our construction of the bound, we note that in general
$\frac{\mathrm{d} }{\mathrm{d} k}\left( \frac{d_i^k}{\sum_j d_j^k} \right)$
can be either positive or negative, and for any particular $k = k'$, we can construct two sets $\mathcal{N}_{k'}$ and $\mathcal{P}_{k'}$ (corresponding to negative and positive elements, respectively), where
\begin{align}
\mathcal{N}_{k'} =\left\{  i : \left. \frac{\mathrm{d} }{\mathrm{d} k}\left( \frac{d_i^k}{\sum_j d_j^k} \right) \right|_{k=k'}  <0 \right\}\quad \mathrm{and}\quad
\mathcal{P}_{k'} =\left\{ i : \left. \frac{\mathrm{d} }{\mathrm{d} k}\left( \frac{d_i^k}{\sum_j d_j^k} \right)   \right|_{k=k'}   \geq 0 \right\}.\label{eq:posnegsets}
\end{align}
With the above, we are in a position to construct the bound:
\begin{align}
\left. \frac{\mathrm{d} H_{d,k}}{\mathrm{d} k} \right|_{k = k'}=&  \left.-\sum_i \frac{\mathrm{d} }{\mathrm{d} k}\left( \frac{d_i^k}{\sum_j d_j^k} \log \frac{d_i^k}{\sum_j d_j^k}\right)  \right|_{k = k'}\\
=& \left.-\sum_i \left[  \frac{\mathrm{d} }{\mathrm{d} k}\left( \frac{d_i^k}{\sum_j d_j^k} \right)  \log \frac{d_i^k}{\sum_j d_j^k}+ \frac{d_i^k}{\sum_j d_j^k} \frac{\mathrm{d} }{\mathrm{d} k}\left(  \log \frac{d_i^k}{\sum_j d_j^k}\right)\right]  \right|_{k = k'} \nonumber\\
=&  \left.-\sum_i \left[ \frac{\mathrm{d} }{\mathrm{d} k}\left( \frac{d_i^k}{\sum_j d_j^k} \right)  \log \frac{d_i^k}{\sum_j d_j^k} + \frac{\mathrm{d} }{\mathrm{d} k}\left(    \frac{d_i^k}{\sum_j d_j^k}\right)\right] \right|_{k = k'}\nonumber\\
=&  -\sum_i  \left. \frac{\mathrm{d} }{\mathrm{d} k}\left( \frac{d_i^k}{\sum_j d_j^k} \right)\right|_{k=k'}  \log \frac{d_i^{k'}}{\sum_j d_j^{k'}}  - \left. \frac{\mathrm{d} }{\mathrm{d} k}\left(  \frac{ \sum_y  d_y^k}{\sum_j d_j^k}\right) \right|_{k=k'}\nonumber\\
=& \left. -\sum_i \frac{\mathrm{d} }{\mathrm{d} k}\left( \frac{d_i^k}{\sum_j d_j^k} \right) \right|_{k = k'}   \log \frac{d_i^{k'}}{\sum_j d_j^{k'}} \nonumber\\
= &   \left. \sum_{i\in\mathcal{N}_{k'}} 
 \frac{\mathrm{d} }{\mathrm{d} k}\left( \frac{d_i^k}{\sum_j d_j^k} \right) \right|_{k = k'} \log \frac{\sum_j d_j^{k'}}{d_i^{k'}}
+\left. \sum_{t\in\mathcal{P}_{k'}}  \frac{\mathrm{d} }{\mathrm{d} k}\left( \frac{ d_t^k}{\sum_j d_j^k} \right)  \right|_{k = k'} \log \frac{\sum_j d_j^{k'}}{d_t^{k'}} \nonumber\\
\leq  & \left.\min_{i\in \mathcal{N}_{k'}} \left\{\log \frac{\sum_j d_j^{k'}}{ d_i^{k'}}\right\}  \sum_{i\in\mathcal{N}_{k'}} 
\left. \frac{\mathrm{d} }{\mathrm{d} k}\left( \frac{d_i^k}{\sum_j d_j^k} \right)  \right|_{k=k'}
+ \max_{i\in \mathcal{P}_{k'}} \left\{ \log \frac{\sum_j d_j^{k'}}{  d_t^{k'}} \right\} \sum_{t\in\mathcal{P}_{k'}}  \frac{\mathrm{d} }{\mathrm{d} k}\left( \frac{ d_t^k}{\sum_j d_j^k} \right) \right|_{k = k'}\label{eq:longbound},
\end{align}
where the final inequality follows because the first and second summations are (by the definition of the sets in \eqref{eq:posnegsets}) negative and positive, respectively.
From Lemma \ref{lem:decreasingfuncofk22}, we have 
$$\max_{i\in\mathcal{P}_{k'}}\left\{\log \frac{\sum_j d_j^{k'}}{ d_i^{k'}}\right\} \leq \min_{i\in\mathcal{N}_{k'}}\left\{\log \frac{\sum_j d_j^{k'}}{ d_i^{k'}}\right\},$$ so that we can extend the final inequality of \eqref{eq:longbound} to
\begin{align}
\left. \frac{\mathrm{d} H_{d,k}}{\mathrm{d} k} \right|_{k = k'} \leq  & \left.\left. \max_{i\in\mathcal{P}_{k'}}\left\{\log \frac{\sum_j d_j^{k'}}{ d_i^{k'}}\right\}  \sum_{i\in\mathcal{N}_{k'}} 
 \frac{\mathrm{d} }{\mathrm{d} k}\left( \frac{d_i^k}{\sum_j d_j^k} \right)  \right|_{k=k'}
+ \max_{i\in\mathcal{P}_{k'}}\left\{\log \frac{\sum_j d_j^{k'}}{ d_i^{k'}}\right\} \sum_{t\in\mathcal{P}_{k'}}  \frac{\mathrm{d} }{\mathrm{d} k}\left( \frac{ d_t^k}{\sum_j d_j^k} \right) \right|_{k = k'}\nonumber\\
= & \max_{i\in\mathcal{P}_{k'}}\left\{\log \frac{\sum_j d_j^{k'}}{ d_i^{k'}}\right\}  \left(\left.\left. \sum_{i\in\mathcal{N}_{k'}} 
 \frac{\mathrm{d} }{\mathrm{d} k}\left( \frac{d_i^k}{\sum_j d_j^k} \right)  \right|_{k'=k}
+  \sum_{t\in\mathcal{P}_{k'}}  \frac{\mathrm{d} }{\mathrm{d} k}\left( \frac{ d_t^k}{\sum_j d_j^k} \right) \right|_{k = k'}\right) \nonumber\\
= & \max_{i\in\mathcal{P}_{k'}}\left\{\log \frac{\sum_j d_j^{k'}}{ d_i^{k'}}\right\} \left.  
 \frac{\mathrm{d} }{\mathrm{d} k}\left(  \frac{\sum_{i } d_i^k}{\sum_j d_j^k} \right)  
  \right|_{k = k'}  =  0 \label{eq:longbound2}.
\end{align}
Substituting this into \eqref{eq:TheildegreeEnt} gives the result.

We now prove \eqref{eq:limitdegreeentropyk124}. For arbitrary $k$, we have
\begin{align}
H_{d,k}\left( G \right) &= -\sum_i \frac{d_i^k}{\sum_j d_j^k} \log \frac{d_i^k}{\sum_j d_j^k}  \\
& \to  -\sum_M \frac{1}{\vert M \vert} \log \frac{1}{\vert M \vert}\\
& = \log \vert M\vert ,
\end{align}
where the second line is obtained by letting $k\to\infty$.
\end{proof}

The following lemma is used within the proof of Lemma \ref{lem:decreasingfuncofk}.
\begin{lem}\label{lem:decreasingfuncofk22}
Let
\begin{align}
\mathcal{N}_{k'} =\left\{  i : \left. \frac{\mathrm{d} }{\mathrm{d} k}\left( \frac{d_i^k}{\sum_j d_j^k} \right) \right|_{k=k'}  <0 \right\}\quad \mathrm{and}\quad
\mathcal{P}_{k'} =\left\{ i : \left. \frac{\mathrm{d} }{\mathrm{d} k}\left( \frac{d_i^k}{\sum_j d_j^k} \right)   \right|_{k=k'}   \geq 0 \right\}.
\end{align}
Then \begin{equation}\max_{i\in\mathcal{N}_{k'}}\{ d_i \} \leq \min_{i\in\mathcal{P}_{k'}}\{ d_i \}.\label{eq:otherbound2}\end{equation}
\end{lem}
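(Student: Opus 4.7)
The plan is to compute the derivative inside each set's defining condition explicitly, factor out the (strictly positive) quantity $\frac{d_i^k}{\sum_j d_j^k}$, and show that the sign of the derivative is controlled by a single threshold that depends only on $k'$ and the multiset $\{d_j\}$ (not on $i$). Once the threshold is identified, membership in $\mathcal{N}_{k'}$ versus $\mathcal{P}_{k'}$ reduces to comparing $d_i$ against that threshold, from which \eqref{eq:otherbound2} follows immediately.

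Concretely, a direct differentiation yields
\begin{equation}
\frac{\mathrm{d}}{\mathrm{d}k}\!\left(\frac{d_i^k}{\sum_j d_j^k}\right)
= \frac{d_i^k}{\sum_j d_j^k}\!\left(\log d_i - \frac{\sum_j d_j^k \log d_j}{\sum_j d_j^k}\right),
\end{equation}
and the prefactor is nonnegative (and vanishes only in the degenerate case $d_i=0$, which does not occur for a connected graph on at least two vertices). Defining
\begin{equation}
\alpha(k') := \frac{\sum_j d_j^{k'} \log d_j}{\sum_j d_j^{k'}},
\end{equation}
the sign of the derivative at $k=k'$ agrees with the sign of $\log d_i - \alpha(k')$. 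Hence $i\in\mathcal{N}_{k'}$ iff $d_i < e^{\alpha(k')}$ and $i\in\mathcal{P}_{k'}$ iff $d_i \geq e^{\alpha(k')}$.

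From this partition the conclusion is immediate: every $d_i$ with $i\in\mathcal{N}_{k'}$ is strictly less than $e^{\alpha(k')}$, while every $d_j$ with $j\in\mathcal{P}_{k'}$ is at least $e^{\alpha(k')}$, so
\begin{equation}
\max_{i\in\mathcal{N}_{k'}}\{d_i\} < e^{\alpha(k')} \leq \min_{j\in\mathcal{P}_{k'}}\{d_j\},
\end{equation}
which gives \eqref{eq:otherbound2}. There is no real obstacle here beyond bookkeeping; the only mild point to note is that the inequality is genuinely the weighted-log-mean threshold $e^{\alpha(k')}$ rather than the arithmetic mean of the degrees, and one must be careful that $\alpha(k')$ is itself the logarithm of a convex combination of the $d_j$, which automatically lies between $\log\min_j d_j$ and $\log\max_j d_j$ so that both sets can be nonempty only in nondegenerate cases.
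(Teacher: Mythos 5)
Your proof is correct and follows essentially the same route as the paper: compute the derivative, observe that its sign is governed by comparing $d_i$ to the single $i$-independent threshold $\exp\bigl(\sum_j d_j^{k'}\log d_j / \sum_j d_j^{k'}\bigr)$, and read off the separation of $\mathcal{N}_{k'}$ and $\mathcal{P}_{k'}$. The paper leaves the threshold in the form $\exp\bigl(\frac{\mathrm{d}}{\mathrm{d}k}\sum_j d_j^{k}\big/\sum_j d_j^{k}\bigr)$ rather than naming it as a weighted log-mean, but this is only a cosmetic difference.
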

\begin{proof}
We have 
$$\left. \frac{\mathrm{d} }{\mathrm{d} k}\left( \frac{d_i^k}{\sum_j d_j^k} \right) \right|_{k=k'}= d_i^{k'} \left( \frac{ \left(   \sum_j d_j^{k'}\right) \log d_i -  \left.\frac{\mathrm{d} }{\mathrm{d}k} \left(  \sum_j d_j^{k}  \right)\right|_{k=k'}}{\left(   \sum_j d_j^{k'} \right)^2}  \right), $$
which is negative if and only if $$\left(   \sum_j d_j^{k'}\right) \log d_i - \left. \frac{\mathrm{d} }{\mathrm{d}{k}} \left(  \sum_j d_j^{k}  \right) \right|_{k=k'}< 0,
\quad \mathrm{if \;and \;only \; if }\quad
 d_i < \exp\left(\frac{\left. \frac{\mathrm{d} }{\mathrm{d}{k}} \left(  \sum_j d_j^{k}  \right) \right|_{k=k'}}{\sum_j d_j^{k'}}\right).$$
Thus, by the definition of the sets $\mathcal{N}_{k'}$ and $\mathcal{P}_{k'}$
$$  \max_{i\in\mathcal{N}_{k'}}\{ d_i \}  < \exp\left(\frac{\left. \frac{\mathrm{d} }{\mathrm{d}{k}} \left(  \sum_j d_j^{k}  \right) \right|_{k=k'}}{\sum_j d_j^{k'}}\right)\leq \min_{i\in\mathcal{P}_{k'}}\{ d_i \}.$$
This proves the lemma.
\end{proof}

\section{Examples and Discussion\label{sec:examples}}

In \cite{freeman1978centrality}, it was argued that the centralization of a graph should index the tendency of a single vertex within the graph to be more central
than  other vertices in the graph. This type of graph centralization measure is typically based on differences between the centrality of the most central
point and that of all others \cite{freeman1978centrality}. Here, we claim that such an approach gets part way to understanding how centralized a graph is, but (in many scenarios) fails to establish the tendency of the graph's centralization to change \emph{significantly} under \emph{small} structural changes. An example of such a graph might be the circle graph, the vertices of which have equal centrality measures throughout. Thus, with the centralization measure proposed by \cite{freeman1978centrality}, the circle graph's centralization will be $0$. However, the removal of any single vertex will result in a path graph, which  canonically possesses significantly more centralization. Interestingly, the Theil index also suffers from this type of problem; however, the von Neumann Theil index does not.

In what follows, we present examples of $10$ common graphs with $n=7$ vertices in FIGs \ref{fig:CD}, \ref{fig:CB}, \ref{fig:TI}, and \ref{fig:QTI}. These figures serve to illustrate the above discussion. Each of the figures is a particular ordering of the $10$ graphs, where the ordering is determined by: 
\begin{enumerate}
\item two centralization measures proposed by Freeman \cite{freeman1978centrality}, based, respectively, on degree centrality (FIG \ref{fig:stargraphCD} to \ref{fig:completegraphCD})
\begin{equation}
C_D = \frac{\sum_{i=1}^n \max\{ d_j \} - d_i}{n^2 -3n + 2} , \label{eq:Freeman}
\end{equation}
and betweenness centrality\footnote{In \eqref{eq:FreemanCB}, $g_i$ is defined to be the betweenness centrality \cite{newman2010networks} of vertex $i$.} (FIG \ref{fig:stargraphCB} to \ref{fig:completegraphCB})
\begin{equation}
C_B = \frac{\sum_{i=1}^n \max\{ g_j \} - g_i}{n^3 - 4n^2 + 5n - 2} ; \label{eq:FreemanCB}
\end{equation}
\item the Theil index \eqref{eq:TheildegreeEnt} with $k=1$ (FIG \ref{fig:stargraph} to \ref{fig:completegraph}); 
\item the von Neumann Theil index \eqref{eq:Theilquant} (FIG \ref{fig:stargraph1} to \ref{fig:completegraph1}).
\end{enumerate}

These orderings provide us with a visualization of a $7$ vertex graph as it becomes more decentralized when measured using each of these metrics. As can be seen from these figures, the different measures impose different orderings on the graphs. However, all measures determine the star graph to be the most centralized and the complete graph to be the least centralized. Note, it was shown in \cite{dairyko2017note} that almost all graphs achieve a larger von Neumann entropy (and hence smaller von Neumann Theil index) than the star graph. It was also conjectured that the star graph minimizes the von Neumann entropy among all connected graphs.

Let us discuss our thesis in more detail. As was discussed in sections \ref{sec:Theilindex} and \ref{sec:graphTheil}, the Theil index depends only on the relative degrees of the vertices, and provides a measure of their skewness. However, it fails to capture richer structural details about the underlying connectedness of the vertices. Similar ideas hold for Freeman's degree centralization measure, and to a lesser extent his betweenness centralization measure. Crucially, they may all be considered as somewhat crude measures of centralization when viewed in the context of graphical models. Our rational for this stems from our observation that the von Neumann Theil index is able to capture the tendency of a graph's macro-structure to change under small micro-structural changes (e.g., vertex removals). The most obvious example of this is the circle and complete graph  (although our arguments can be extended to the other graph's too), which are both $k$-regular, but the vertices of the circle graph are much more dependent on {some} vertices than others. This is not so for the complete graph. Arguably then, the circle graph is considerably more centralized than the complete graph. Despite this, both of Freeman's measures, and the Theil index, rank these two graphs as being the most decentralized out of all graphs, whereas the von Neumann Theil index rightly identifies the more centralized nature of the circle graph while categorizing the complete graph as the most decentralized. Similar discussions can be had towards the other graphs.

The von Neumann Theil index is able to capture the centralization properties that we desire from the Theil index (i.e., it bounds the skew measured by the Theil index, which we interpret as a crude measure of centralization, Theorem \ref{lem:degreevon}), but it also depends on the underlying connectedness of the graph in a way that the other measures do not. In the quantum sense, the von Neumann Theil index is just the quantum relative entropy between the quantum state corresponding to the graph's Laplacian and a maximally mixed quantum state with density operator $I/n$. Consequently, it tells us how `close' the underlying quantum state corresponding to the graph is to the maximally mixed state, i.e., the complete graph. This discussion goes further to explaining our ideas.

 \begin{figure}
 \textbf{Graph ordering according to the degree centralization measure $C_D$ \eqref{eq:Freeman}}\\
    \centering
    \begin{subfigure}{0.19\textwidth}
        \centering
        \includegraphics[width=1\textwidth]{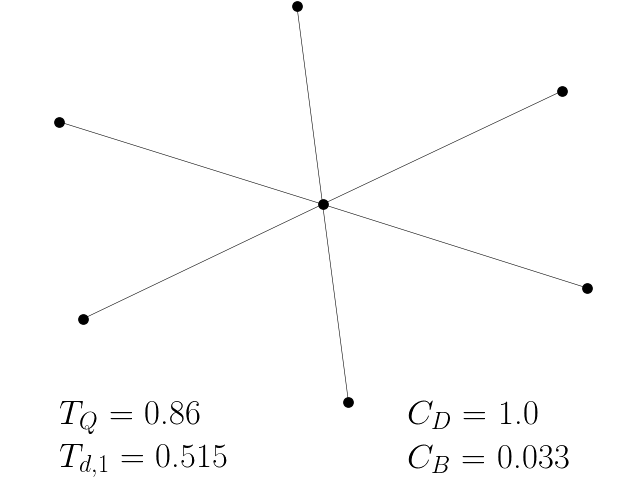} 
        \caption{Star graph. \label{fig:stargraphCD}}
    \end{subfigure}\hfill 
     \begin{subfigure}{0.19\textwidth}
        \centering
        \includegraphics[width=1\textwidth]{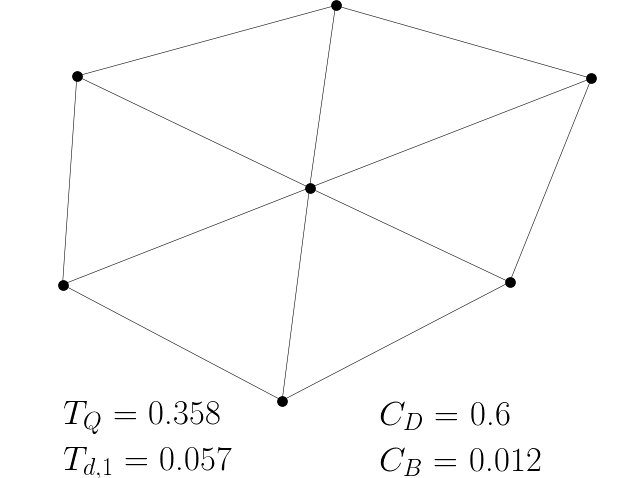} 
        \caption{Wheel graph.}
    \end{subfigure}  
    \begin{subfigure}{0.19\textwidth}
        \centering
        \includegraphics[width=1\textwidth]{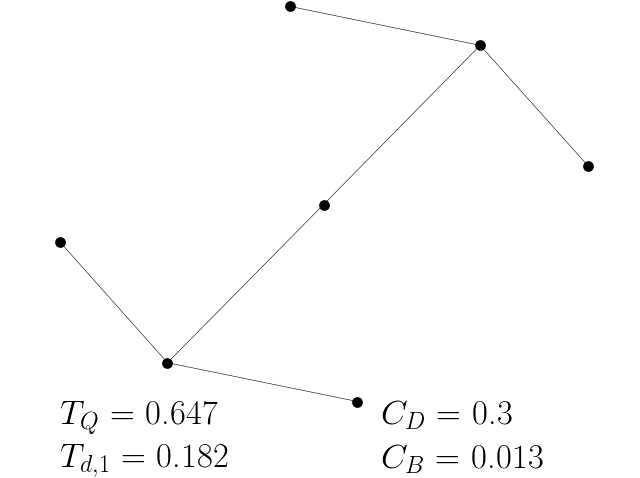} 
        \caption{Balanced tree.}
    \end{subfigure} 
    \hfill  
        \begin{subfigure}{0.19\textwidth}
        \centering
        \includegraphics[width=1\textwidth]{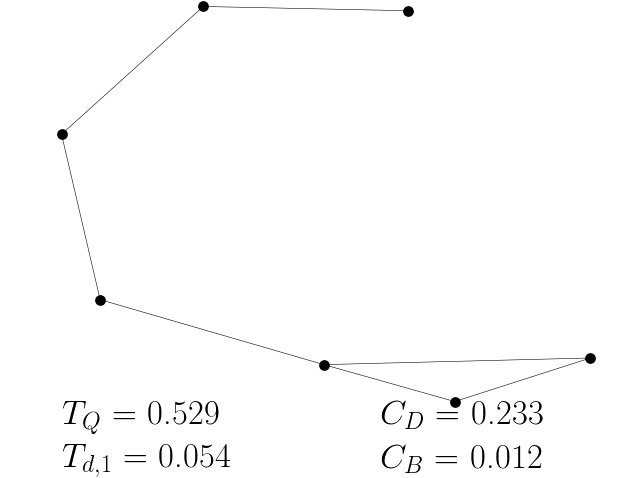} 
        \caption{Lolly pop graph.}
    \end{subfigure}
    \begin{subfigure}{0.19\textwidth}
        \centering
        \includegraphics[width=1\textwidth]{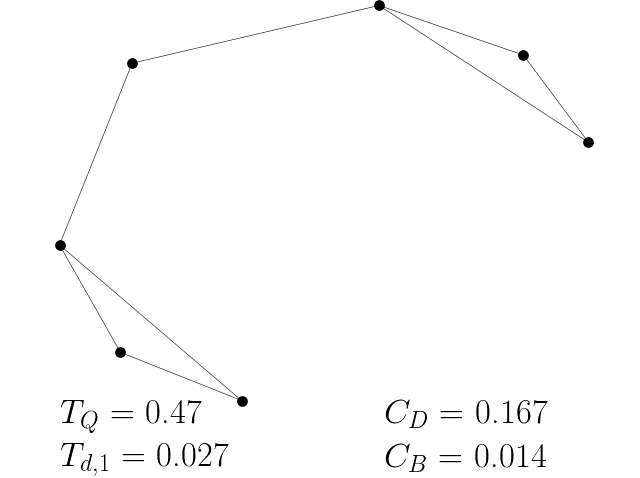} 
        \caption{Barbell graph.}
    \end{subfigure}\hfill 
    \begin{subfigure}{0.19\textwidth}
        \centering
        \includegraphics[width=1\textwidth]{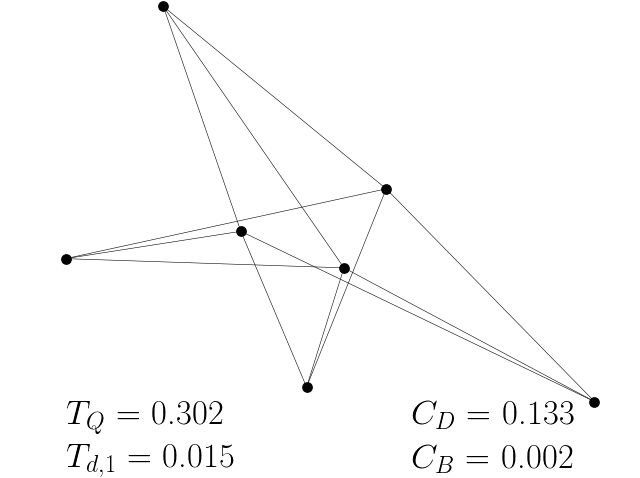} 
        \caption{$(3,4)$-bipartite graph.}
    \end{subfigure}\hfill
    \begin{subfigure}{0.19\textwidth}
        \centering
        \includegraphics[width=1\textwidth]{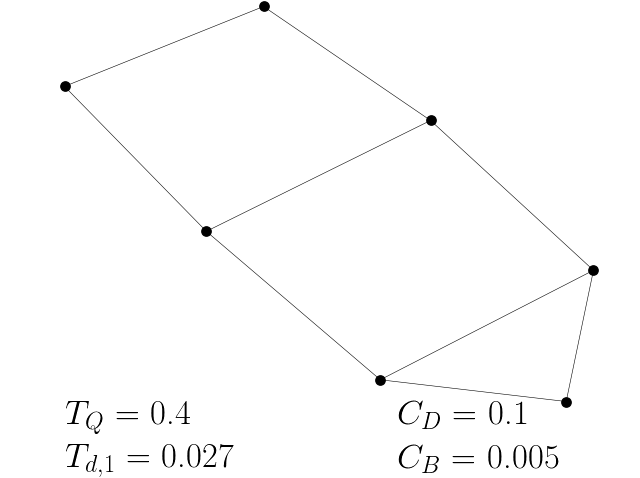} 
        \caption{Two story house graph.}
    \end{subfigure} 
    \begin{subfigure}{0.19\textwidth}
        \centering
        \includegraphics[width=1\textwidth]{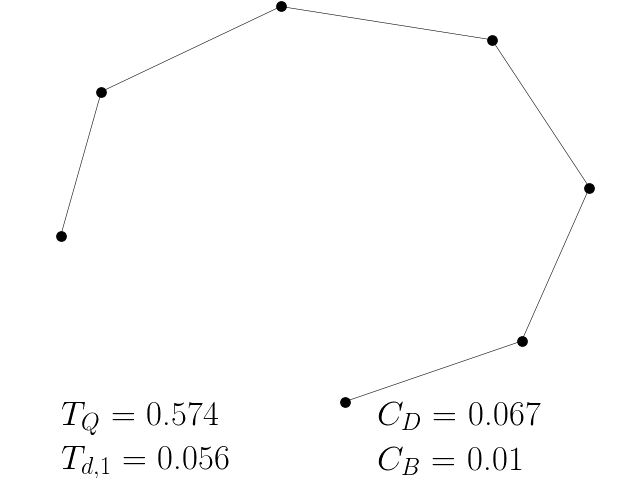} 
        \caption{Path graph.}
    \end{subfigure}
    \hfill   
    \begin{subfigure}{0.19\textwidth}
        \centering
        \includegraphics[width=1\textwidth]{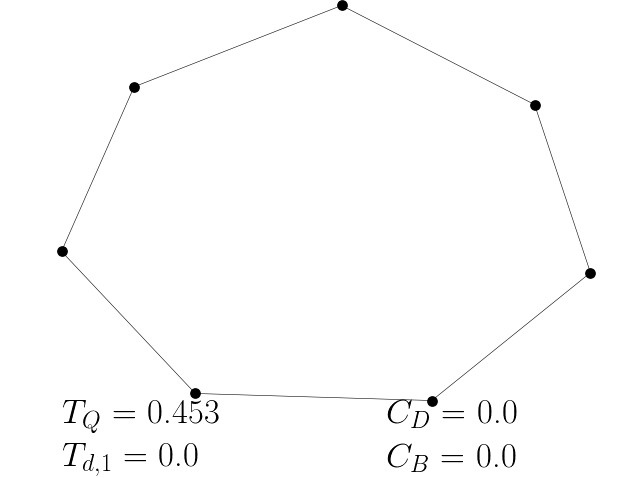} 
        \caption{Circle graph.}
    \end{subfigure} 
    \begin{subfigure}{0.19\textwidth}
        \centering
        \includegraphics[width=1\textwidth]{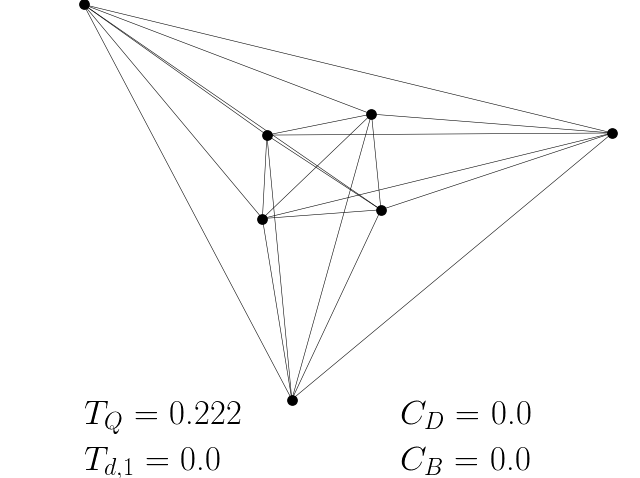} 
        \caption{Complete graph\label{fig:completegraphCD}}
    \end{subfigure}
    \caption{$n=7$ vertex graphs arranged according to decreasing order of the degree centralization measure $C_D$ proposed by Freeman \eqref{eq:Freeman}. \label{fig:CD}}
\end{figure}

\begin{figure}
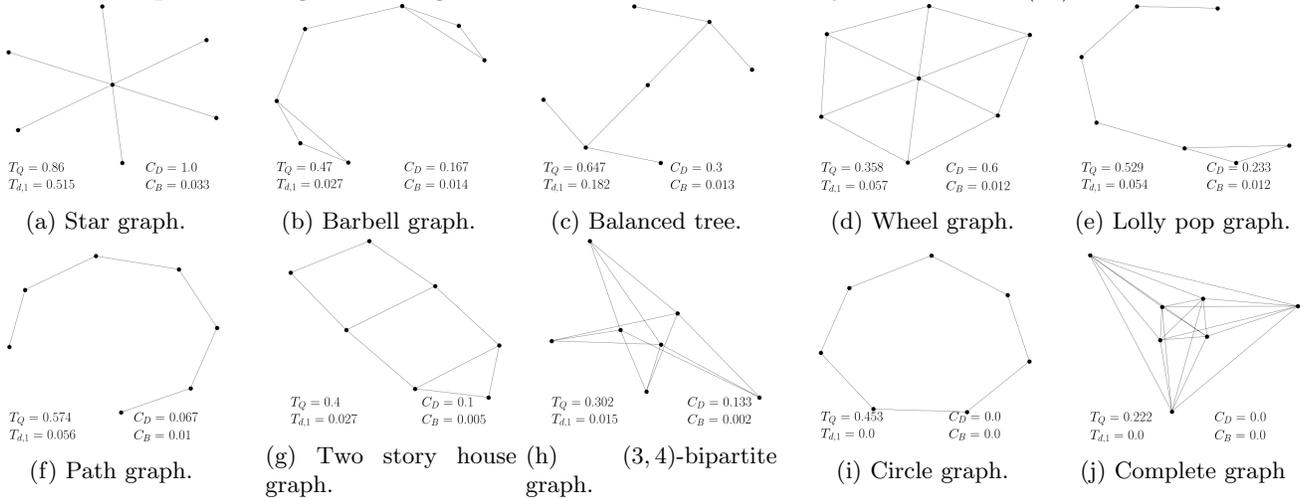

 \textbf{Graph ordering according to the betweenness centrality measure $C_B$  \eqref{eq:FreemanCB}}\\
    \centering
    \begin{subfigure}{0.19\textwidth}
        \centering
        \includegraphics[width=1\textwidth]{figures/star.png} 
        \caption{Star graph. \label{fig:stargraphCB}}
    \end{subfigure}\hfill 
    \begin{subfigure}{0.19\textwidth}
        \centering
        \includegraphics[width=1\textwidth]{figures/barbell.png} 
        \caption{Barbell graph.}
    \end{subfigure}\hfill 
    \begin{subfigure}{0.19\textwidth}
        \centering
        \includegraphics[width=1\textwidth]{figures/balancedTree.png} 
        \caption{Balanced tree.}
    \end{subfigure} 
    \hfill  
     \begin{subfigure}{0.19\textwidth}
        \centering
        \includegraphics[width=1\textwidth]{figures/wheel.png} 
        \caption{Wheel graph.}
    \end{subfigure}
    \begin{subfigure}{0.19\textwidth}
        \centering
        \includegraphics[width=1\textwidth]{figures/lollypop.png} 
        \caption{Lolly pop graph.}
    \end{subfigure}
    \begin{subfigure}{0.19\textwidth}
        \centering
        \includegraphics[width=1\textwidth]{figures/path.png} 
        \caption{Path graph.}
    \end{subfigure}
    \hfill     
     \begin{subfigure}{0.19\textwidth}
        \centering
        \includegraphics[width=1\textwidth]{figures/two_story_house_graph.png} 
        \caption{Two story house graph.}
    \end{subfigure} 
    \begin{subfigure}{0.19\textwidth}
        \centering
        \includegraphics[width=1\textwidth]{figures/comp_multi_bipartite.png} 
        \caption{$(3,4)$-bipartite graph.}
    \end{subfigure}\hfill
    \begin{subfigure}{0.19\textwidth}
        \centering
        \includegraphics[width=1\textwidth]{figures/circle.png} 
        \caption{Circle graph.}
    \end{subfigure} 
    \begin{subfigure}{0.19\textwidth}
        \centering
        \includegraphics[width=1\textwidth]{figures/completeGraph.png} 
        \caption{Complete graph\label{fig:completegraphCB}}
    \end{subfigure}
    \caption{$n=7$ vertex graphs arranged according to decreasing order of the betweenness centralization measure $C_B$  proposed by Freeman \eqref{eq:FreemanCB}. \label{fig:CB}}
\end{figure}

\begin{figure}
    \centering
    \textbf{Graph ordering according to the Theil index $T_{d,1}$ \eqref{eq:TheildegreeEnt}}\\
    \begin{subfigure}{0.19\textwidth}
        \centering
        \includegraphics[width=1\textwidth]{figures/star.png} 
        \caption{Star graph. \label{fig:stargraph}}
    \end{subfigure}\hfill 
    \begin{subfigure}{0.19\textwidth}
        \centering
        \includegraphics[width=1\textwidth]{figures/balancedTree.png} 
        \caption{Balanced tree.}
    \end{subfigure} 
    \hfill  
    \begin{subfigure}{0.19\textwidth}
        \centering
        \includegraphics[width=1\textwidth]{figures/wheel.png} 
        \caption{Wheel graph.}
    \end{subfigure}
    \begin{subfigure}{0.19\textwidth}
        \centering
        \includegraphics[width=1\textwidth]{figures/path.png} 
        \caption{Path graph.}
    \end{subfigure}
    \hfill   
    \begin{subfigure}{0.19\textwidth}
        \centering
        \includegraphics[width=1\textwidth]{figures/lollypop.png} 
        \caption{Lolly pop graph.}
    \end{subfigure}  
    \begin{subfigure}{0.19\textwidth}
        \centering
        \includegraphics[width=1\textwidth]{figures/barbell.png} 
        \caption{Barbell graph.}
    \end{subfigure}\hfill 
    \begin{subfigure}{0.19\textwidth}
        \centering
        \includegraphics[width=1\textwidth]{figures/two_story_house_graph.png} 
        \caption{Two story house graph.}
    \end{subfigure}  
    \begin{subfigure}{0.19\textwidth}
        \centering
        \includegraphics[width=1\textwidth]{figures/comp_multi_bipartite.png} 
        \caption{$(3,4)$-bipartite graph.}
    \end{subfigure}\hfill
    \begin{subfigure}{0.19\textwidth}
        \centering
        \includegraphics[width=1\textwidth]{figures/circle.png} 
        \caption{Circle graph.}
    \end{subfigure}
    \begin{subfigure}{0.19\textwidth}
        \centering
        \includegraphics[width=1\textwidth]{figures/completeGraph.png} 
        \caption{Complete graph\label{fig:completegraph}}
    \end{subfigure}
    \caption{$n=7$ vertex graphs arranged according to decreasing order of the Theil index $T_{d,1}$ \eqref{eq:TheildegreeEnt}.\label{fig:TI}}
\end{figure}

 \begin{figure}
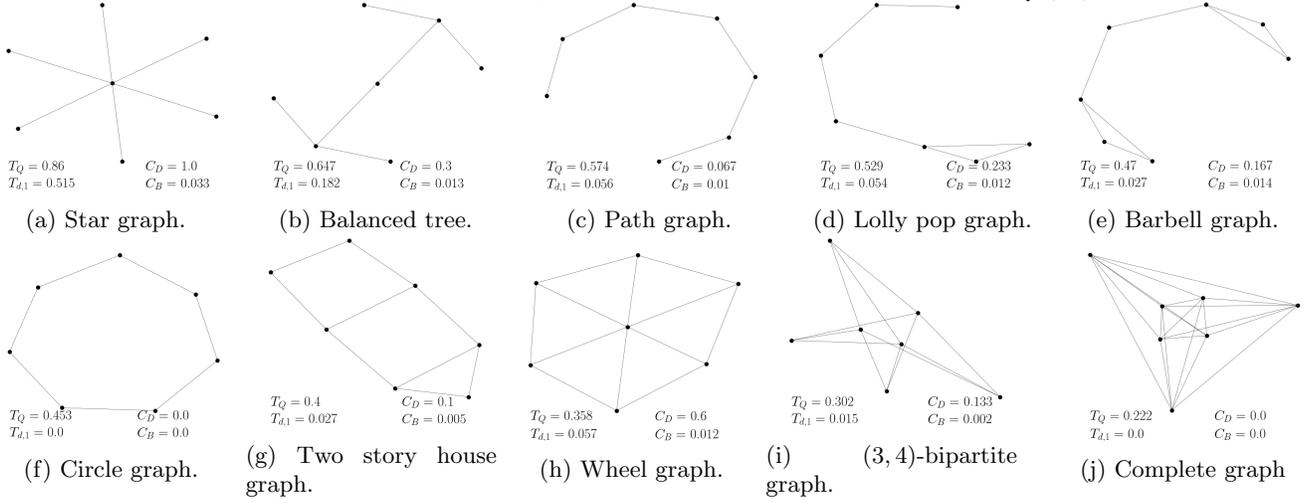

    \centering
    \textbf{Graph ordering according to the von Neumann Theil index $T_Q$ \eqref{eq:Theilquant}}\\
    \begin{subfigure}{0.19\textwidth}
        \centering
        \includegraphics[width=1\textwidth]{figures/star.png} 
        \caption{Star graph. \label{fig:stargraph1}}
    \end{subfigure}\hfill 
    \begin{subfigure}{0.19\textwidth}
        \centering
        \includegraphics[width=1\textwidth]{figures/balancedTree.png} 
        \caption{Balanced tree.}
    \end{subfigure} 
    \hfill  
    \begin{subfigure}{0.19\textwidth}
        \centering
        \includegraphics[width=1\textwidth]{figures/path.png} 
        \caption{Path graph.}
    \end{subfigure}
    \hfill   
    \begin{subfigure}{0.19\textwidth}
        \centering
        \includegraphics[width=1\textwidth]{figures/lollypop.png} 
        \caption{Lolly pop graph.}
    \end{subfigure}
    \begin{subfigure}{0.19\textwidth}
        \centering
        \includegraphics[width=1\textwidth]{figures/barbell.png} 
        \caption{Barbell graph.}
    \end{subfigure}\hfill
    \begin{subfigure}{0.19\textwidth}
        \centering
        \includegraphics[width=1\textwidth]{figures/circle.png} 
        \caption{Circle graph.}
    \end{subfigure} 
    \begin{subfigure}{0.19\textwidth}
        \centering
        \includegraphics[width=1\textwidth]{figures/two_story_house_graph.png} 
        \caption{Two story house graph.}
    \end{subfigure}  
    \begin{subfigure}{0.19\textwidth}
        \centering
        \includegraphics[width=1\textwidth]{figures/wheel.png} 
        \caption{Wheel graph.}
    \end{subfigure}   
    \begin{subfigure}{0.19\textwidth}
        \centering
        \includegraphics[width=1\textwidth]{figures/comp_multi_bipartite.png} 
        \caption{$(3,4)$-bipartite graph.}
    \end{subfigure}\hfill
    \begin{subfigure}{0.19\textwidth}
        \centering
        \includegraphics[width=1\textwidth]{figures/completeGraph.png} 
        \caption{Complete graph\label{fig:completegraph1}}
    \end{subfigure}
    \caption{$n=7$ vertex graphs arranged according to decreasing order of the von Neumann Theil index $T_Q$ \eqref{eq:Theilquant}.\label{fig:QTI}}
\end{figure}

\section{Future Direction}

In the previous sections, we demonstrated that the von Neumann index of
a graph could be used to bound the graph's Theil index (Theorem \ref{lem:degreevon}). The Theil
index was shown to be a canonical (but crude) measure of centralization (sections \ref{sec:graphTheil} and \ref{sec:TheilandCent}).
We therefore concluded that the graph's von Neumann index (specifically,
its \emph{von Neumann Theil index}) provided key structural information
about the level of centralization present across the graph. 
In this section, we show that the von Neumann Theil index can be generalized
by considering the R\'enyi entropy. We call this the generalized
von Neumann Theil index. The generalized von Neumann Theil index is shown
to bound the negative logarithm of the graph's Jain fairness index.

\subsection{Generalizing the von Neumann Theil index}

Let $\rho_{G}$ be the graph's trace normalized Laplacian \eqref{eq:vertexstate}. Above (Definition \ref{def:quantumTheil}),
the von Neumann Theil index was defined to be 
\begin{equation}
T_{Q}\left(G\right)=\log n-H\left(G\right),\label{eq:quantumTheilIndex}
\end{equation}
where 
\[
H\left(G\right)=-\mathrm{Tr}\left\{ \rho_{G}\log\rho_{G}\right\} .
\]
Equation (\ref{eq:quantumTheilIndex}) can be trivially generalized
by considering the Renyi entropy \cite{muller2013quantum} of $\rho_{G}$, which is defined to be
\[
H^{\left(p\right)}:=\frac{1}{1-p}\log\mathrm{Tr}\left\{ \rho_{G}^{p}\right\} .
\]
Using L'Hopital's rule \cite{waner1995introduction}, we find that
\begin{equation}
H\left(G\right)=\lim_{p\to1}H^{\left(p\right)}.\label{eq:renyivoneq}
\end{equation}
We can therefore define the graph's generalized von Neumann Theil index
to be 
\begin{equation}
T_{Q}^{\left(p\right)}:=\log n-H^{\left(p\right)}\label{eq:genQTheilInd}
\end{equation}
and from \eqref{eq:renyivoneq} we have 
\begin{equation}
T_{Q}\left(G\right)=\lim_{p\to1}T_{Q}^{\left(p\right)}\left(G\right).\label{eq:renyivoneq-1}
\end{equation}
Note, because $H^{(p)}$ is a monotonically decreasing function of $p$, the generalized von Neumann Theil index is a monotonically increasing function of $p$.

\subsection{Bounding the Jain Fairness Index}

In the following lemma, we show that \eqref{eq:genQTheilInd} can be used to bound the negative logarithm of the Jain fairness index \cite{jain1984quantitative} (an index used to measure the \emph{fairness} of a distribution of characteristics among a set of nodes). This adds further credit to the idea that von Neumann entropy (and its generalization) characterises graph centralization.

\begin{lem}
Let $G$ be a connected graph on $n$ vertices, where the degree of
vertex $i$ is given by $d_{i}$. Define the Jain fainess index to
be \cite{jain1984quantitative}
\begin{equation}
\mathcal{J}\left(d_{1},\dots,d_{n}\right)=\frac{\left(\sum d_{i}\right)^{2}}{n\sum d_{i}^{2}}.\label{eq:jainindex}
\end{equation}
Then the following equality holds:
\[
T_{Q}^{\left(2\right)}\geq-\log\mathcal{J}\left(d_{1},\dots,d_{n}\right).
\]
\end{lem}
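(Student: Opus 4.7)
The plan is to reduce the inequality to a direct algebraic comparison by using the closed-form expression already established for the graph's R\'enyi-$2$ entropy in equation \eqref{eq:Renyi2graphentropy}. First, I would apply the definition \eqref{eq:genQTheilInd} at $p=2$ and substitute the explicit formula
\[
H^{(2)}(G) = H_2(G) = \log\!\left( \frac{\left(\sum_i d_i\right)^2}{\sum_i d_i + \sum_i d_i^2} \right),
\]
which yields
\[
T_Q^{(2)}(G) = \log n - H_2(G) = \log\!\left( \frac{n\bigl(\sum_i d_i + \sum_i d_i^2\bigr)}{\left(\sum_i d_i\right)^2} \right).
\]

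Next, I would unpack the right-hand side of the claim. Directly from the definition \eqref{eq:jainindex},
\[
-\log \mathcal{J}(d_1,\dots,d_n) = \log\!\left( \frac{n\sum_i d_i^2}{\left(\sum_i d_i\right)^2} \right).
\]
Since $\log$ is monotonic, the desired inequality $T_Q^{(2)} \geq -\log \mathcal{J}$ is equivalent to comparing the arguments of the two logarithms, i.e., to showing
\[
\frac{n\bigl(\sum_i d_i + \sum_i d_i^2\bigr)}{\left(\sum_i d_i\right)^2} \;\geq\; \frac{n\sum_i d_i^2}{\left(\sum_i d_i\right)^2}.
\]

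Clearing the common positive denominator and the common factor of $n$, the inequality collapses to $\sum_i d_i \geq 0$, which holds trivially because degrees in an undirected graph are non-negative (and connectedness further guarantees $\sum_i d_i > 0$). The only genuine work is thus the bookkeeping in the first step, matching the R\'enyi-$2$ formula \eqref{eq:Renyi2graphentropy} with the Jain index; I do not anticipate a serious obstacle. For completeness, one may note that equality holds precisely when $\sum_i d_i = 0$, which is excluded for a connected graph on $n \geq 1$ vertices, so the inequality is in fact strict.
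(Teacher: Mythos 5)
Your proposal is correct and follows essentially the same route as the paper's own proof: both substitute the closed-form R\'enyi-$2$ expression from \cite[equation (2)]{dairyko2016note} into $T_Q^{(2)} = \log n - H_2(G)$ and reduce the claim to the trivial observation that $\sum_i d_i + \sum_i d_i^2 \geq \sum_i d_i^2$, i.e.\ $\sum_i d_i \geq 0$. Your closing remark that the inequality is in fact strict for a connected graph is a small correct addition not present in the paper.
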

\begin{proof}
This follows from
\begin{align*}
T_{Q}^{\left(2\right)}=\log n-\log\left(\frac{\left(\sum d_{i}\right)^{2}}{\sum d_{i}+\sum d_{i}^{2}}\right) & \geq\log n-\log\left(\frac{\left(\sum d_{i}\right)^{2}}{\sum d_{i}^{2}}\right)\\
 & =-\log\left(\frac{\left(\sum d_{i}\right)^{2}}{n\sum d_{i}^{2}}\right)\\
 & =-\log\mathcal{J}\left(d_{1},\dots,d_{n}\right).
\end{align*}
The first equality follows from \cite[equation (2)]{dairyko2016note}
and the last line follows from (\ref{eq:jainindex}).
\end{proof}

\section{Conclusions}

 In this work, we define an $n$ point distribution on an $n$ vertex graph (which we call the \emph{relative} degree distribution), where each point of the distribution is proportional to a positive power of the corresponding vertex degree. We then use this distribution to construct a Theil index for the graph. Naturally, this Theil index can be interpreted as a measure of graph centralization. We then show that the von Neumann index of the graph's trace normalized combinatorial Laplacian can be used to bound the graph's Theil index. This is the first time that von Neumann index has been related to the Theil index. We refer to the bound as the von Neumann Theil index. 
 We conclude that the graph's von Neumann index (specifically, the von Neumann Theil index) provides  structural information about the level of centralization within the graph. We also demonstrate that the von Neumann Theil index provides a more precise measure of centralization when compared to traditional centralization measures and the graph's classical Theil index. As a final note, we demonstrate that the von Neumann Theil index can be generalized by considering the quantum R\'enyi-p entropy. We show that with $p=2$, this generalization bounds the negative logarithm of the graph's Jain fairness index, further demonstrating the relationship between centralization and quantum graph entropy.
 
 \section*{Funding}
This work was supported by the EPSRC under grant number (EP/K04057X/2) and the UK National Quantum Technologies Programme under grant number (EP/M013243/1).

\bibliographystyle{unsrt}
\bibliography{animeshrefs}

\begin{thebibliography}{10}

\bibitem{dehmer2008information}
Matthias Dehmer.
\newblock Information processing in complex networks: Graph entropy and
  information functionals.
\newblock {\em Applied Mathematics and Computation}, 201(1):82--94, 2008.

\bibitem{anand2011shannon}
Kartik Anand, Ginestra Bianconi, and Simone Severini.
\newblock Shannon and von neumann entropy of random networks with heterogeneous
  expected degree.
\newblock {\em Physical Review E}, 83(3):036109, 2011.

\bibitem{coon2016topological}
Justin~P Coon.
\newblock Topological uncertainty in wireless networks.
\newblock In {\em Global Communications Conference (GLOBECOM), 2016 IEEE},
  pages 1--6. IEEE, 2016.

\bibitem{artacoon2016topological}
Arta Cika, Justin~P Coon, and Sunwoo Kim.
\newblock Effects of directivity on wireless network complexity.
\newblock In {\em Workshop on Spatial Stochastic Models for Wireless Networks
  (SpaSWiN),, 2017 IEEE}, pages 1--7. IEEE, 2017.

\bibitem{de2015structural}
Manlio De~Domenico, Vincenzo Nicosia, Alexandre Arenas, and Vito Latora.
\newblock Structural reducibility of multilayer networks.
\newblock {\em Nature communications}, 6:6864, 2015.

\bibitem{dairyko2017note}
Michael Dairyko, Leslie Hogben, Jephian C-H Lin, Joshua Lockhart, David
  Roberson, Simone Severini, and Michael Young.
\newblock Note on von neumann and r{\'e}nyi entropies of a graph.
\newblock {\em Linear Algebra and its Applications}, 521:240--253, 2017.

\bibitem{von1955mathematical}
John Von~Neumann.
\newblock {\em Mathematical foundations of quantum mechanics}.
\newblock Number~2. Princeton university press, 1955.

\bibitem{braunstein2006laplacian}
Samuel~L Braunstein, Sibasish Ghosh, and Simone Severini.
\newblock The laplacian of a graph as a density matrix: a basic combinatorial
  approach to separability of mixed states.
\newblock {\em Annals of Combinatorics}, 10(3):291--317, 2006.

\bibitem{passerini2008neumann}
Filippo Passerini and Simone Severini.
\newblock The von neumann entropy of networks.
\newblock {\em Available at SSRN 1382662}, 2008.

\bibitem{anand2009entropy}
Kartik Anand and Ginestra Bianconi.
\newblock Entropy measures for networks: Toward an information theory of
  complex topologies.
\newblock {\em Physical Review E}, 80(4):045102, 2009.

\bibitem{de2016interpreting}
Niel de~Beaudrap, Vittorio Giovannetti, Simone Severini, and Richard Wilson.
\newblock Interpreting the von neumann entropy of graph laplacians, and
  coentropic graphs.
\newblock {\em A Panorama of Mathematics: Pure and Applied}, 658:227, 2016.

\bibitem{belhaj2016weighted}
Abdelilah Belhaj, Adil Belhaj, Larbi Machkouri, Moulay~Brahim Sedra, and Soumia
  Ziti.
\newblock Weighted graph theory representation of quantum information inspired
  by lie algebras.
\newblock {\em arXiv preprint arXiv:1609.03534}, 2016.

\bibitem{dutta2016graph}
Supriyo Dutta, Bibhas Adhikari, and Subhashish Banerjee.
\newblock A graph theoretical approach to states and unitary operations.
\newblock {\em Quantum Information Processing}, 15(5):2193--2212, 2016.

\bibitem{dutta2017zero}
Supriyo Dutta, Bibhas Adhikari, and Subhashish Banerjee.
\newblock Zero discord quantum states arising from weighted digraphs.
\newblock {\em arXiv preprint arXiv:1705.00808}, 2017.

\bibitem{simmons2017symmetric}
David~E. Simmons, Justin~P. Coon, and Animesh Datta.
\newblock Symmetric laplacians, quantum density matrices and their von-neumann
  entropy.
\newblock {\em Linear Algebra and its Applications}, 532(Supplement C):534 --
  549, 2017.

\bibitem{giovannetti2011kirchhoff}
Vittorio Giovannetti and Simone Severini.
\newblock The kirchhoff's matrix-tree theorem revisited: counting spanning
  trees with the quantum relative entropy.
\newblock {\em arXiv preprint arXiv:1102.2398}, 2011.

\bibitem{du2010note}
Wenxue Du, Xueliang Li, Yiyang Li, and Simone Severini.
\newblock A note on the von neumann entropy of random graphs.
\newblock {\em Linear Algebra and its Applications}, 433(11-12):1722--1725,
  2010.

\bibitem{de2016spectral}
Manlio De~Domenico and Jacob Biamonte.
\newblock Spectral entropies as information-theoretic tools for complex network
  comparison.
\newblock {\em Physical Review X}, 6(4):041062, 2016.

\bibitem{bavelas1950communication}
Alex Bavelas.
\newblock Communication patterns in task-oriented groups.
\newblock {\em The Journal of the Acoustical Society of America},
  22(6):725--730, 1950.

\bibitem{newman2010networks}
Mark Newman.
\newblock {\em Networks: an introduction}.
\newblock Oxford university press, 2010.

\bibitem{qi2012laplacian}
Xingqin Qi, Eddie Fuller, Qin Wu, Yezhou Wu, and Cun-Quan Zhang.
\newblock Laplacian centrality: A new centrality measure for weighted networks.
\newblock {\em Information Sciences}, 194:240--253, 2012.

\bibitem{freeman1978centrality}
Linton~C Freeman.
\newblock Centrality in social networks conceptual clarification.
\newblock {\em Social networks}, 1(3):215--239, 1978.

\bibitem{scott2017social}
John Scott.
\newblock {\em Social network analysis}.
\newblock Sage, 2017.

\bibitem{theil1972statistical}
H.~Theil.
\newblock {\em Statistical decomposition analysis: With applications in the
  social and administrative sciences}.
\newblock Studies in mathematical and managerial economics. North-Holland Pub.
  Co., 1972.

\bibitem{conceiccao2000young}
Pedro Conceicao and Pedro Ferreira.
\newblock The young person's guide to the theil index: Suggesting intuitive
  interpretations and exploring analytical applications.
\newblock 2000.

\bibitem{novotny2007measurement}
Josef Novotny.
\newblock On the measurement of regional inequality: does spatial dimension of
  income inequality matter?
\newblock {\em The Annals of Regional Science}, 41(3):563--580, 2007.

\bibitem{sampath1988equity}
Rajan~K Sampath.
\newblock Equity measures for irrigation performance evaluation.
\newblock {\em Water International}, 13(1):25--32, 1988.

\bibitem{cover2012elements}
Thomas~M Cover and Joy~A Thomas.
\newblock {\em Elements of information theory}.
\newblock John Wiley and Sons, 2012.

\bibitem{wilde2011classical}
Mark~M Wilde.
\newblock From classical to quantum shannon theory.
\newblock {\em arXiv preprint arXiv:1106.1445}, 2011.

\bibitem{cao2014extremality}
Shujuan Cao, Matthias Dehmer, and Yongtang Shi.
\newblock Extremality of degree-based graph entropies.
\newblock {\em Information Sciences}, 278:22--33, 2014.

\bibitem{newman2000laplacian}
Michael~William Newman.
\newblock {\em The Laplacian spectrum of graphs}.
\newblock PhD thesis, University of Manitoba, 2000.

\bibitem{godsil2013algebraic}
Chris Godsil and Gordon~F Royle.
\newblock {\em Algebraic graph theory}, volume 207.
\newblock Springer Science and Business Media, 2013.

\bibitem{dairyko2016note}
Michael Dairyko, Leslie Hogben, Jephian C-H Lin, Joshua Lockhart, David
  Roberson, Simone Severini, and Michael Young.
\newblock Note on von neumann and renyi entropies of a graph.
\newblock {\em arXiv preprint arXiv:1609.00420}, 2016.

\bibitem{muller2013quantum}
Martin Muller-Lennert, Frederic Dupuis, Oleg Szehr, Serge Fehr, and Marco
  Tomamichel.
\newblock On quantum renyi entropies: A new generalization and some properties.
\newblock {\em Journal of Mathematical Physics}, 54(12):122203, 2013.

\bibitem{waner1995introduction}
Stefan Waner.
\newblock Introduction to analysis.
\newblock 1995.

\bibitem{jain1984quantitative}
Raj Jain, Dah-Ming Chiu, and William~R Hawe.
\newblock {\em A quantitative measure of fairness and discrimination for
  resource allocation in shared computer system}, volume~38.
\newblock Eastern Research Laboratory, Digital Equipment Corporation Hudson,
  MA, 1984.

\end{thebibliography}

\end{document}